\documentclass{IEEEtran}

\usepackage{xcolor}
\usepackage{cite}
\usepackage{url}
\usepackage{lipsum}
\usepackage{graphicx}
\usepackage{amsfonts}
\usepackage{amsmath}
\usepackage[ruled,noline,linesnumbered,noend]{algorithm2e}
\usepackage{booktabs}
\usepackage[bookmarks=false]{hyperref}
\usepackage[USenglish]{babel} 
\usepackage{authblk}
\usepackage{breqn}
\usepackage{amsthm}

\usepackage{array}
\usepackage{graphbox}
\usepackage{subfigure}
\usepackage{tikz}
\usepackage{float}

\DeclareMathOperator*{\argmin}{arg\,min}
\newcommand{\norm}[1]{\left\lVert#1\right\rVert}




\title{
CodEx: A Modular Framework for Joint Temporal De-blurring and Tomographic Reconstruction
}


\author{
Soumendu~Majee,
Selin~Aslan,
Do\u ga~G\" ursoy~\IEEEmembership{Member,~IEEE}
Charles~A.~Bouman~\IEEEmembership{Fellow,~IEEE},

\thanks{Soumendu Majee is with Samsung Research America (e-mail: \href{mailto:soumendu.majee.1@gmail.com}{soumendu.majee.1@gmail.com}).
Selin Aslan is with the X-ray Science Division, Argonne National Laboratory (e-mail: \href{mailto:selinaslanphd@gmail.com}{selinaslanphd@gmail.com}).
Doga Gursoy is with the X-ray Science Division, Argonne National Laboratory and the Electrical Engineering and Computer Science Department, Northwestern University  (e-mail: \href{mailto:dgursoy@anl.gov}{dgursoy@anl.gov}).
Charles A. Bouman is with the School of Electrical and Computer Engineering, Purdue University (e-mail: \href{mailto:bouman@purdue.edu}{bouman@purdue.edu}).

This research used resources of the Advanced Photon Source, a U.S. Department of Energy (DOE) Office of Science User Facility operated for the DOE Office of Science by Argonne National Laboratory under Contract No. DE-AC02-06CH11357.
This work was partially supported by an NSF grant number CCF-1763896.
}
}


\begin{document}


\IEEEtitleabstractindextext{%
\begin{abstract}

In many computed tomography (CT) imaging applications, it is important to rapidly collect data from an object that is moving or changing with time. 
Tomographic acquisition is generally assumed to be step-and-shoot, where the object is rotated to each desired angle, and a view is taken.
However, step-and-shoot acquisition is slow and can waste photons, so in practice fly-scanning is done where the object is continuously rotated while collecting data.
However, this can result in motion-blurred views and consequently reconstructions with severe motion artifacts.

In this paper, we introduce CodEx, a modular framework for joint de-blurring and tomographic reconstruction that can effectively invert the motion blur introduced in sparse view fly-scanning.
The method is a synergistic combination of a novel acquisition method with a novel non-convex Bayesian reconstruction algorithm. 
CodEx works by encoding the acquisition with a known binary code that the reconstruction algorithm then inverts.
Using a well chosen binary code to encode the measurements can improve the accuracy of the inversion process.
The CodEx reconstruction method uses the alternating direction method of multipliers (ADMM) to split the inverse problem into iterative deblurring and reconstruction sub-problems, making reconstruction practical to implement.
We present reconstruction results on both simulated and binned experimental data to demonstrate the effectiveness of our method.

\end{abstract}

\begin{IEEEkeywords}
Computed tomography, Coded exposure, Inverse Problems, MBIR, Motion-invariant imaging, Deblurring, ADMM
\end{IEEEkeywords}
}

\maketitle
\IEEEdisplaynontitleabstractindextext

\section{Introduction}

Computed tomography (CT) imaging has been widely used in a variety of applications to study the internal structure of static and dynamic objects.

Traditionally, the spatial and temporal resolution of CT reconstruction has been limited by effects due to interpolation~\cite{crawford1990computed}, scintillator decay-rates~\cite{rutherford2016evaluating}, and X-ray bow-tie filters~\cite{harpen1999simple} among others.
More recently, temporal resolution has also become important for the imaging of moving objects~\cite{flohr2001heart}.

For rapidly changing objects, good temporal resolution is crucial in order to resolve the reconstructed object accurately.
Model based iterative reconstruction (MBIR)~\cite{bouman1996unified} techniques have led to significant improvements in temporal resolution for time-resolved CT through the use of novel view-sampling~\cite{mohan2015timbir} and improved prior modeling~\cite{majee2021multi,majee20194d,mohan2015timbir,majee2021high} .
However, even with these improvements, the temporal resolution is fundamentally limited by the rate of collection of projection measurements.

\begin{figure}[ht]
\centering     
\includegraphics[width=0.5\textwidth]{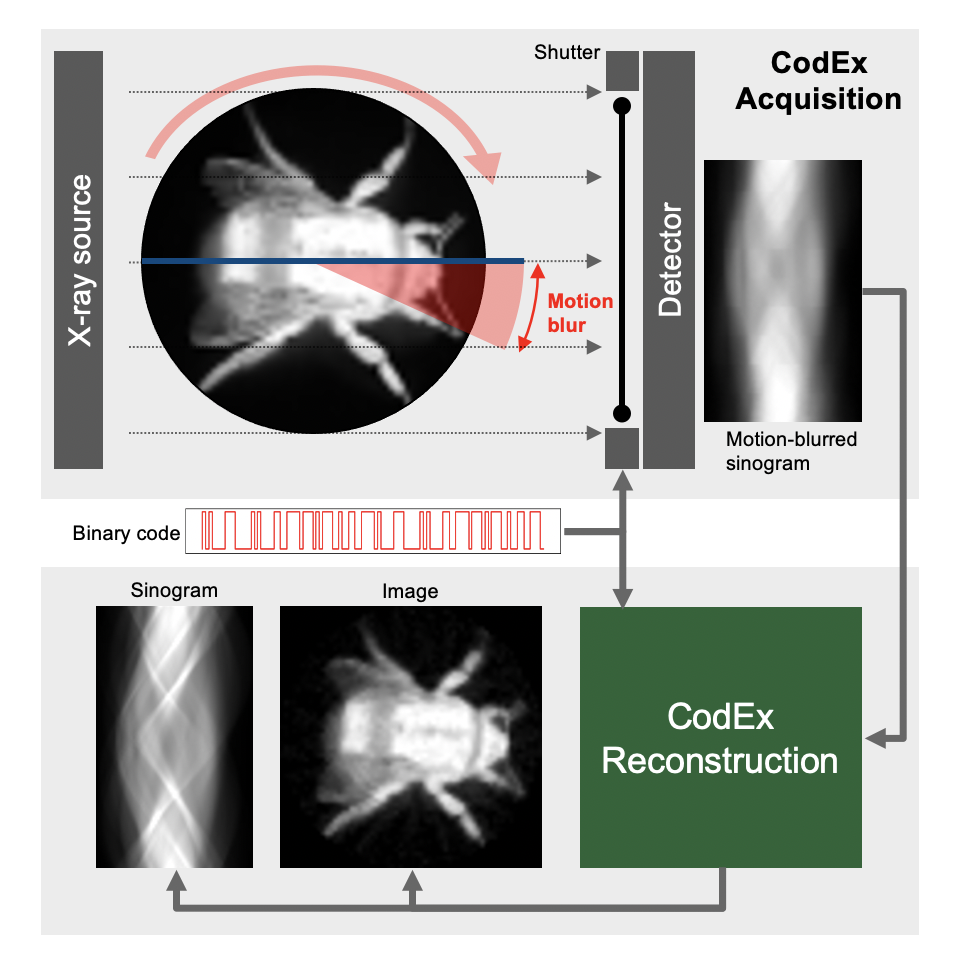}
\caption{Illustration of our method, CodEx. 
CodEx is a synergistic combination of a coded acquisition and a non-convex Bayesian reconstruction.
During acquisition, CodEx flutters the exposure on and off rapidly in a known binary code during each view, resulting in a coded motion blur.
CodEx reconstruction uses the knowledge of the code to effectively invert the coded motion blur.
A well chosen code can improve the accuracy of CodEx reconstruction, but CodEx can improve the reconstruction quality even without coding by inverting the motion blur due to fly-scanning.}
\label{fig:conops}
\end{figure}

In step-and-shoot acquisition views are taken at a set of discrete angles, with each view containing little or no blur.
This can be done by rotating and stopping the object before acquiring each view, or by rotating continuously and taking each view with a short exposure time that eliminates blur.
In either case, photons are wasted as the object is rotated.

A more practical acquisition strategy is fly-scanning~\cite{ching2018rotation} where the object rotates continuously and exposures are taken with 100\% duty cycle so that no photons are wasted.
In this case there are roughly two alternatives to acquiring the views, which we will refer to as slow-rotation and fast-rotation fly-scanning.

In slow-rotation fly-scanning, each view can be taken with a long exposure without introducing blur.
However, with long exposures, the total time required to take $M_\theta$ views is relatively long.
This is a problem if the object being imaged is moving or changing with time because the object motion will introduce artifacts in the reconstruction.
One might choose to reduce the value of $M_\theta$ in order to reduce the total acquisition time for all the views.
However, in this case, the acquired sparse views span only a limited total view angle $<< 180^\circ$ (i.e., limited view), resulting in significant reconstruction artifacts\cite{frikel2013characterization}.

In fast-rotation fly-scanning, which is the focus of this research, the object is rotated rapidly, and sparse view samples are collected using a relatively small value of $M_\theta$.
Since the object is rapidly rotating, the $M_\theta$ views can span the desired total view angle of $\approx 180^\circ$, thereby avoiding limited view angle reconstruction artifacts.
However, the disadvantage of fast rotation fly-scanning is that each view must be acquired over a significant blur angle. 
Consequently, conventional reconstruction methods produce blurry reconstructions when used with fast-rotation fly-scanning.

One obvious approach to reducing the effects of fly-scan motion blur is to perform linear deconvolution as a pre-processing step. However, an important limitation of this approach is that it requires dense view sampling. This is because sparse view sampling results in aliasing that can not be fully corrected using linear deconvolution.

Previous research based on sinogram pre-processing includes the work of Chang et al.~\cite{chang2011preliminary} who proposed pre-processing using a linear deblurring filter.
Later, Chen et al.~\cite{chen2015computed} proposed a more sophisticated pre-processing based on regularized sinogram deblurring using a TV regularizer.
However, both these methods assumed that multiple rotations of the object were used to collect a full set of dense views.
This is an important limitation since sparse view sampling allows for faster acquisition of objects that change with time.

Other researchers have proposed iterative reconstruction methods for mitigating motion blur caused by fast-rotation fly-scanning.
Cant et al.~\cite{cant2015modeling} performed SIRT~\cite{SIRT} reconstruction that incorporated a linear approximation to the blur.
However, this method required dense view sampling in order to avoid artifacts in the full field-of-view reconstruction.
More recently, Ching et al.~\cite{ching2019time} used ART~\cite{ART} reconstruction along with a linear approximation to the blur, and they also assumed binary coding during the acquisitions of each view in order to improve the invertiblity of the problem. Again, this approach assumed the availability of dense view samples in order to achieve full deconvolution.

Finally, Tilley et al.~\cite{steven2018high} performed model-based reconstruction using a non-linear forward model of the blur. While this approach allows for the possibility of sparse view reconstruction, it was only demonstrated for case were the views were densely sampled.

In this paper, we introduce CodEx, a modular framework for accurate non-linear tomographic reconstruction of sparse-view data acquired with fast-rotation fly-scans.
Figure~\ref{fig:conops} illustrates our approach.
During acquisition, CodEx modulates the X-ray flux by a known binary code during each view.
This results in a coded motion blur that is easily inverted~\cite{raskar2006coded}.

Modulation of the X-ray flux by a code can be performed at the X-ray source, shutter, or the detector.
For synchrotron or X-ray free electron laser (XFEL), the X-ray source can be pulsed~\cite{pergament2014high}.
Alternatively, the X-ray beam can be blocked using a rotary disc shutter~\cite{gembicky2005fast, rose2020variable}, or CMOS detectors can be switched on and off via an electronically generated gating signal~\cite{broennimann2006pilatus,ejdrup2009picosecond}.

CodEx subsequently uses the knowledge of the code to solve a non-convex iterative reconstruction problem using the alternating direction method of multipliers (ADMM) algorithm.
The resulting CodEx algorithm iteratively applies a non-linear deblurring operation followed by a model based iterative reconstruction (MBIR)~\cite{bouman1996unified} operation.
This results in a modular reconstruction algorithm that can be easily adapted to different CT geometries.

The novel contributions of this paper are:
\begin{enumerate}
    \item A modular reconstruction approach using ADMM for the nonlinear reconstruction of sparse-view transmission CT data;
    \item A coded exposure measurement scheme for collection of sparse transmission CT data;
    \item An interlaced view-sampling approach for progressively collecting sparse CT view data.
\end{enumerate}

We present results on simulated and binned experimental data in order to demonstrate the effectiveness of CodEx.
The results show CodEx produces the highest quality reconstruction results when the number of views and total photon dosage is limited.

\section{CT Fly Scanning}
\label{sec:problem_description}

In a conventional computed tomography (CT) setting, a single view is assumed to measure the projection of an object at a single angle.
To conform with this assumption, a step-and-shoot scanning is done where the object is rotated to each desired angle, a view measurement is taken, and the rotation is resumed.
However, this leads to slow acquisition and wasted photons.
A more practical approach is fly-scanning, where the object is continuously rotated while the view measurements are being taken.
However, in this case, the detector integrates the incoming photon-flux over a range of angles instead of a single angle, as shown in Figure~\ref{fig:blur}.
The resulting vector of $M_d$ expected photon-counts at the detector at angle $\theta_0$ can be written as
\begin{equation} \label{eq:blur_integration}
    \bar{\Lambda}^{\text{boxcar}}_{\theta_0} = \Lambda^0 \int_{\theta_0}^{\theta_0+\Delta\theta} \exp \left\{ -A_{\theta} \, x \right\} d\theta \ ,
\end{equation}
where $A_\theta \in \mathbb{R}^{M_d \times N}$ performs the forward projection of the image $x$ at angle $\theta$, $\Lambda^0$ is the photon-flux of the X-ray source per unit rotation angle, and $\Delta\theta$ is the blur-angle.
The super-script boxcar refers to the standard exposure where the detector shutter is on throughout the acquisition.

\begin{figure}[ht]
\centering     
\includegraphics[width=0.45\textwidth]{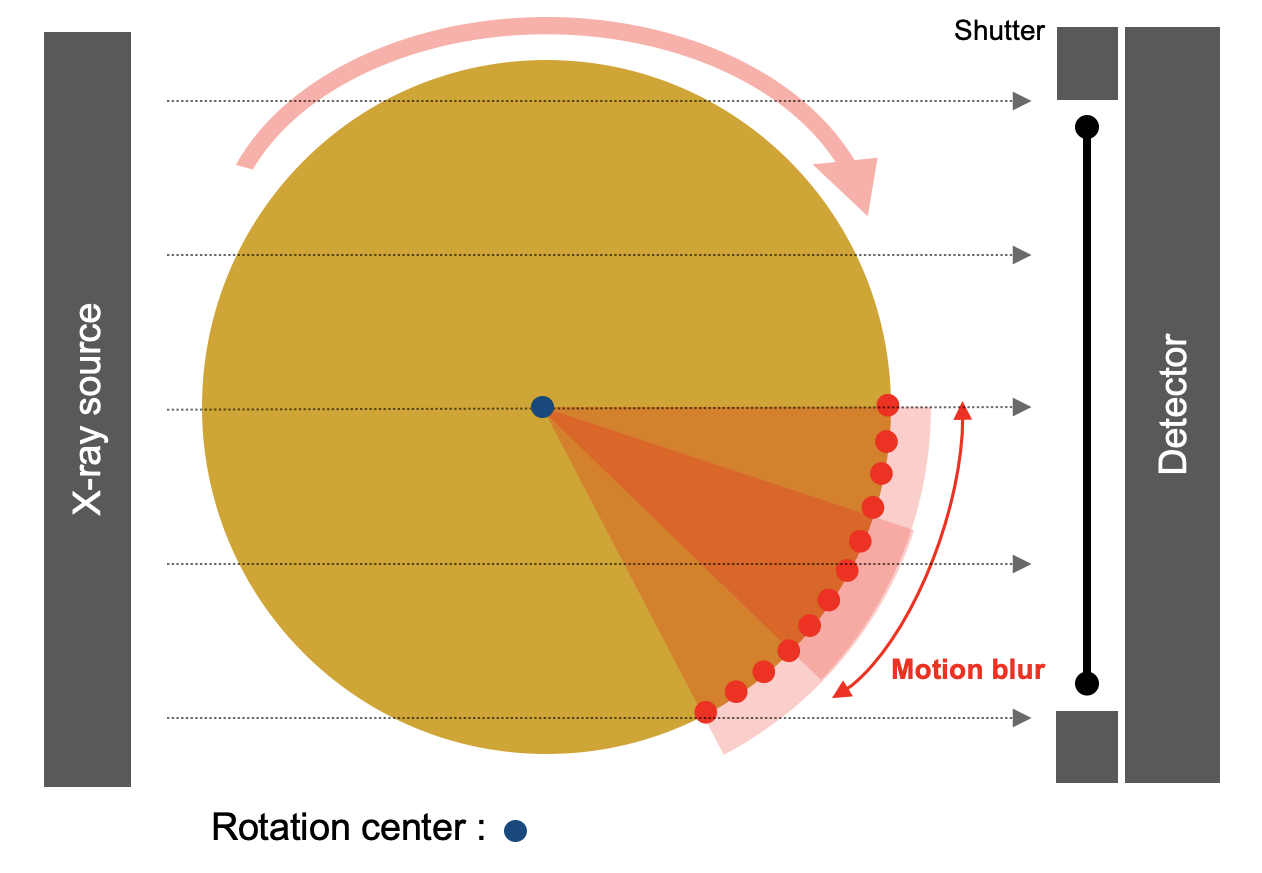}
\caption{Illustration of motion blur due to rotation.
In fly-scanning, the detector integrates photons over a range of projection angles (red sector). This integration can be approximated as a discrete sum over closely spaced micro-projection angles (red dots).
}
\label{fig:blur}
\end{figure}

We can approximate the integration in equation~\eqref{eq:blur_integration} using a discrete sum over $K$ closely spaced angles.
The resulting expected photon-counts, $\bar{\lambda}^{\text{boxcar}}_{\theta_0}$ can be written as
\begin{equation}\label{eq:blur_sum}
    \bar{\lambda}^{\text{boxcar}}_{\theta_0} = \lambda^0 \sum_{k=0}^{K-1} \exp \left\{ -A_{\theta_0+\frac{k\Delta\theta}{K} } \, x \right\} \ .
\end{equation}
We will refer to these $K$ closely spaced angles as micro-projection angles and the corresponding projections as micro-projections henceforth.

As $\Delta\theta$ becomes larger, we collect more photons in a single measurement, but this comes at the cost of blurred measurements, which leads to a reconstruction with motion artifacts.
To overcome this problem, our approach will be to modulate the photon flux at each of the $K$ micro-projection angles by a binary code.
In this case, the expected photon-counts at the detector at angle $\theta_0$ can be written as
\begin{equation}\label{eq:coded_sum}
    \bar{\lambda}_{\theta_0} = \lambda^0 \sum_{k=0}^{K-1} c_k \exp \left\{ -A_{\theta_0+\frac{k\Delta\theta}{K} } \, x \right\} \ ,
\end{equation}
where $c = [c_0, c_1, \cdots, c_{K-1}]$ is the binary code used to modulate the photons.
In this case, each measurement is formed by a coded sum over $K$ non-overlapping micro-projection angles.
Notice that equation~\eqref{eq:coded_sum} reduces to the boxcar case of equation~\eqref{eq:blur_sum} when $c = [1,1,\cdots, 1]$.
On the other hand, when $c = [1,0,\cdots, 0]$, i.e. one is pulsating the source or detector with a small exposure, equation~\eqref{eq:coded_sum} reduces to the step-and-shoot scanning case.
Modulating the photon flux results in loss of photons compared to the boxcar case of $c = [1,1,\cdots, 1]$, but many more photons are collected relative to the step-and-shoot case of $c = [1,0,\cdots, 0]$.
A good choice of code $c$ can result in an invertible blur while improving the signal to noise ratio (SNR) of the measurements relative to the step-and-shoot case.

\section{CodEx Formulation}
\label{sec:CodEx}

In this section, we will introduce CodEx, a synergistic combination of coded acquisition and CT reconstruction.
During the acquisition process, we collect $M_\theta$ 2D radiograph measurements at $M_\theta$ different measurement angles.
As Figure~\ref{fig:blur} illustrates, each radiograph measures the projection of the object across a range of angles which can be written as a non-linear function of $K$ micro-projections using the discrete approximation in equation~\eqref{eq:blur_sum}.
Some of the $M_\theta$ measurements can have overlapping projections and share the same micro-projections.
Let us define $N_\theta$ to be the number of unique micro-projection angles out of the $KM_\theta$ maximum possible micro-projection angles.

\subsection{Data Likelihood Model}

For each measurement, we use a binary code $c = [c_0, c_1, \cdots, c_{K-1}]$ to modulate the photon-flux over $K$ micro-projection angles as shown in equation~\eqref{eq:coded_sum}.
The resulting vector of expected photon counts for all the measurement angles can be written as
\begin{equation}
\label{eq:lambda_abstract}
    \bar{\lambda} = \bar{c} \lambda^0 C  \exp\left\{ -Ax \right\} \ ,
\end{equation}
where, $\bar{\lambda} \in \mathbb{R}^{M_\theta M_d}$ is the vector of expected photon counts for the $M_\theta$ measurement angles and $M_d$ detector pixels, $C \in \mathbb{R}^{M_\theta M_d \times N_\theta M_d}$ is a sparse matrix  that performs the coded sum in equation~\eqref{eq:coded_sum} and is normalized such that each row of $C$ sums to $1$,  $\bar{c} = \sum_{k=0}^{K-1}c_k$ is the normalizing constant, $A \in \mathbb{R}^{N_\theta M_d \times N}$ projects the image $x \in \mathbb{R}^{N}$ for the $N_\theta$ unique micro-projection angles.
The structure of the matrix $C$ depends on how the $M_\theta$ measurement angles are arranged, and how each measurement relates to the micro-projections at $K$ micro-projection angles.
Section~\ref{sec:interlaced_sampling} provides details on the structure of $C$ for a practical interlaced view-sampling strategy.

The vector of incident photon counts at the detector, $\lambda \in \mathbb{R}^{M_d M_\theta}$ are given by
\begin{equation}\label{eq:pois}
\lambda \sim \mbox{Pois} ( \bar{\lambda} ) \ ,
\end{equation}
where $\mbox{Pois} (\bar{\lambda})$ denotes an element wise Poisson distribution with mean $\bar{\lambda}$.
A higher expected photon-count $\bar{\lambda}$ results in a higher signal to noise ratio (SNR) in the measurements.
A good code $c$ introduces an invertible blur and allows us to preserve high frequency information while at the same time collecting sufficient number of photons~\cite{raskar2006coded}.

In order to derive the forward model, we first convert the photon-count measurements into projection measurements as is typically done in tomography.
In order to do this, we normalize by the ``blank scan'' obtained when the object is removed (i.e., $x=0$), and we take the negative log to form
\begin{align}
    y &= -\log\left\{ \dfrac{\lambda }{\bar{c} \lambda^0 } \right\} \ ,
\label{eq:YlogLL}
\end{align}
where $y \in \mathbb{R}^{M_d M_\theta}$ is the vector of projection measurements for all $M_\theta$ views, and $\bar{c}=\sum_{i=0}^{K-1} c_k$ results from the assumption that $x=0$ in equation~\eqref{eq:coded_sum}.
If the photon counts are large, then we can make the approximation~\cite{bouman1996unified,balke2018separable} that
\begin{eqnarray}
E\left[ y |x \right] &\approx& - \log\left\{ C \exp \left\{ -Ax \right\}  \right\} \notag \\
Var\left[ y |x \right] &\approx& D^{-1} \ .
\end{eqnarray}
where $D = \text{diag}\{ \lambda \}$.
In practice, the true photon counts, $\lambda$ are often unknown.
Consequently, we set the precision matrix $D$ as 
\begin{equation}
    D = \text{diag}\{ w \exp \left\{ - y \right\}  \} \ ,
\end{equation}
where the scalar $w$ is empirically chosen~\cite{mohan2015timbir,balke2018separable}.

Using a second order Taylor series approximation~\cite{bouman1996unified} to the Poisson log-likelihood function, we can write the log-likelihood function as
\begin{equation}
\label{eq:LL_abstract}
-\log p(y | x ) = \dfrac{1}{2}
\norm{ y + \log\left\{ C \exp \left\{ -Ax \right\}  \right\} }_{D}^2 + \text{const}(y) \ .
\end{equation}
Here $\text{const}(y)$ refers to constant terms that are not a function of $x$ and can thus be ignored while optimizing with respect to $x$.

\subsection{MAP Estimate}

The X-ray attenuation coefficient image $x^{*}$ can be reconstructed by computing a Maximum A Posteriori (MAP) estimate as
\begin{align}
x^{*} &= \argmin_{x} \left\{ -\log p(y|x) -\log p(x) \right\} 
\notag \\
&= \argmin_{x} \Bigg\{ \dfrac{1}{2}
\norm{ y + \log\left\{ C \exp \left\{ -Ax \right\}  \right\} }_{D}^2 + h(x) \Bigg\}
\label{eq:MAP_abstract}
\end{align}
where $h(x)$ is the regularization or prior model~\cite{majee20194d,majee2017model} and the data likelihood term $-\log p(y|x)$ follows from equation~\eqref{eq:LL_abstract}.
Notice that the non-linear logarithmic and exponential terms make direct optimization of the cost function in equation~\eqref{eq:MAP_abstract} challenging.
However, in the following sections, we propose a modular algorithm for solving (\ref{eq:MAP_abstract}) that makes the solution practical to implement.

\subsection{ADMM Formulation}

In order to simplify the optimization in equation~\eqref{eq:MAP_abstract}, we split the cost function into two parts with the following constraint
\begin{equation}
    p = Ax \ ,
\end{equation}
where $p \in \mathbb{R}^{N_\theta M_d} $ is the projection of the image $x$ at the $N_\theta$ finely spaced micro-projection angles.
In other words, $p$ is the full set of unobserved micro-projections of the object.
We thus form the following equivalent problem.
\begin{equation}
\begin{aligned}\label{eq:MAPSplit}
   x^*, p^* = & \argmin_{x,p} \Bigg\{ \dfrac{1}{2}
\norm{ y + \log\left\{ C \exp \left\{ -p \right\}  \right\} }_{D}^2 + h(x) \Bigg\} \\
    & \text{s.t} \ p = A x 
\end{aligned}
\end{equation}

Next, we will use the alternating directions method of multipliers (ADMM) method~\cite{boyd2011distributed} to solve the constrained optimization of equation~(\ref{eq:MAPSplit}).
The augmented Lagrangian for this problem is given by
\begin{align}\label{Eq:AugLag2}
    L(p,x,u) =& \dfrac{1}{2}\norm{ y + \log\left\{ C \exp \left\{ -p \right\}  \right\} }_{D}^2 + h(x)
    \notag \\
    &+ \dfrac{1}{2 \sigma^2} \norm{p - A \, x + u}^2  \text{,}
\end{align}
where $\sigma$ is a tunable parameter, and $u$ is the scaled dual variable.
The ADMM algorithm for this problem can then be formulated as Algorithm~\ref{algo:admm_updates2}.

\begin{algorithm}

\DontPrintSemicolon
Initialize: $p$, $x$, $u$

\While{not converged}
{
    $ p \leftarrow \argmin_{p} L(p,x,u) $ \;
    $ x \leftarrow \argmin_{x} L(p,x,u) $ \;
    $ u \leftarrow u + p - A x $ \;
}
$ x^{*} \leftarrow x $

\caption{ADMM formulation for coded exposure reconstruction}
\label{algo:admm_updates2}
\end{algorithm}

\subsection{Modular Implementation}

Note that the optimization sub-problems in Algorithm~\ref{algo:admm_updates2} can be simplified as
\begin{align}
    \argmin_{p} L(p,x,u)  = \argmin_{p} &\Big\{ \dfrac{1}{2}\norm{ y + \log\left\{ C \exp \left\{ -p \right\}  \right\} }_{D}^2 \notag \\
    &+ \dfrac{1}{2 \sigma^2} \norm{p - (Ax-u)}^2 \Big\} \text{,} 
    \label{eq:augLagrange_p}
    \\
    \argmin_{x} L(p,x,u)  = \argmin_{x} &\Big\{ \dfrac{1}{2 \sigma^2}\norm{(p+u)-Ax}^2 \notag \\
    &+ h(x) \Big\} \text{.}
\label{eq:augLagrange_x}
\end{align}

We can rewrite the optimization problems in equations~\eqref{eq:augLagrange_p}~and~\eqref{eq:augLagrange_x} in a more compact form as
\begin{align}
    \argmin_{p} L(p,x,u) &= F_d(Ax-u) \text{,}
\label{eq:augLagrange_p_compact}
    \\
    \argmin_{x} L(p,x,u) &= F_t(p+u) \text{,}
\label{eq:augLagrange_x_compact}
\end{align}
where the operators $F_d$ and $F_t$ are defined as
\begin{align}
    F_d(\tilde{p}) = \argmin_{p} &\Bigg\{ \dfrac{1}{2}\norm{ y + \log\left\{ C \exp \left\{ -p \right\}  \right\} }_{D}^2 \notag \\
    &+ \dfrac{1}{2 \sigma^2} \norm{p - \tilde{p}}^2 \Bigg\} \text{,}
\label{eq:op_F_d}
\\
    F_t(\tilde{p}) = \argmin_{x} &\Bigg\{ \dfrac{1}{2 \sigma^2}\norm{\tilde{p}-Ax}^2 +
    h(x) \Bigg\} \text{,}
\label{eq:op_F_t}
\end{align}
where $\tilde{p}$ is a representative variable.

Both operators $F_d$ and $F_t$ have intuitive interpretations.
From its form, function $F_d$ can be interpreted as the MAP deblurring function.
Intuitively, $F_d$ computes the MAP estimate of the micro-projections $p$ given the coded blurred measurements $y$ and a prior distribution of $N(\tilde{p},\sigma^2 I)$.
In other words, $F_d$ is a function that recovers the full set of unobserved micro-projections in the proximity of $\tilde{p}$.
On the other hand, the function $x=F_t(p)$ has the simple interpretation of being a function that computes the regularized tomographic reconstruction, $x$, given the micro-projections $p$.

\begin{algorithm}

\DontPrintSemicolon
Initialize: $p$, $x$, $u$

\While{not converged}
{
    $ p \leftarrow \tilde{F_d} (A x - u; p) $ \;
    $ x \leftarrow \tilde{F_t} (p + u; x) $ \;
    $ u \leftarrow u + p - A x $ \;
}
$ x^{*} \leftarrow x $

\caption{CodEx reconstruction algorithm}
\label{algo:coded_updates2}
\end{algorithm}

Algorithm~(\ref{algo:coded_updates2}) shows the complete CodEx reconstruction algorithm.
Since it is impractical to minimize the functions in equations \eqref{eq:op_F_d} and \eqref{eq:op_F_t} completely, we perform partial updates starting from an initial condition. 
Convergence of ADMM for partial update solutions to sub-problems have been demonstrated in the literature~\cite{aslan2019joint,sridhar2020distributed}.
The operator $\tilde{F_d}(\tilde{p}; p_{\text{init}})$ denotes the computation of $F_d(\tilde{p})$ for a fixed number of partial iterations starting from an initial condition of $p_{\text{init}}$.
Similarly, the operator $\tilde{F_t}(\tilde{p}; x_{\text{init}})$ denotes the computation of $F_t(\tilde{p})$ for a fixed number of partial iterations starting from an initial condition of $x_{\text{init}}$.

Algorithm~\ref{algo:F_d_gradientDescent} outlines the computation of the deblurring function, $\tilde{F_d} (\tilde{p}; p_{\text{init}})$ that performs a partial update minimization of equation~(\ref{eq:op_F_d}) starting from an initial value of $p_{\text{init}}$.
We use a gradient descent approach with a backtracking line-search~\cite{armijo1966minimization} to perform the optimization.
The cost function is denoted by
\begin{equation}
    f_d(p) = \dfrac{1}{2}\norm{ y + \log\left\{ C \exp \left\{ -p \right\}  \right\} }_{D}^2 + \dfrac{1}{2 \sigma^2} \norm{p - \tilde{p}}^2 \notag \text{.}
\end{equation}
The gradient of the cost function $f_d(p)$ is denoted by
\begin{align}
    g = & - \text{diag}\left( \exp \left\{ -p \right\} \right) C^{\top} \text{diag}\left( C \exp \left\{ -p \right\} \right)^{-1} D r \notag \\ 
    & + \dfrac{1}{\sigma^2} (p - \tilde{p}) \ ,
\label{eq:grad}
\end{align}
where the measurement residual $r$ is defined as
\begin{equation}
    r = y + \log\left\{ C \exp \left\{ -p \right\}  \right\}  \ .
\end{equation}
The cost function $f_d(.)$ is evaluated several times for each iteration in order to choose an appropriate step-size of $\eta$ starting from an initial step-size of $\eta_0$.
We halve $\eta$ until the Armijo–Goldstein condition~\cite{armijo1966minimization} of $f_d(p - \eta g) \leq f_d(p)-\eta \epsilon \norm{g}^2$ is met for a control parameter $\epsilon$.
Note that since $C$ performs the coded sum in equation~\eqref{eq:coded_sum} along the angle dimension, multiplication by both $C$ and $C^\top$ can be computed for each detector pixel independently.
Thus, the gradient computation in equation~\eqref{eq:grad} can be computed independently for each detector pixel, making the computation quite suitable for a parallel implementation.

\begin{algorithm}

\DontPrintSemicolon
\KwIn{Initial micro-projections: $ p_{\text{init}} $ \newline
Proximal micro-projections input: $\tilde{p}$}
\KwOut{Final micro-projections: $ p^{*} $}

Initialize: $p \gets p_{\text{init}} $ 

\For{$i \gets 1$ to $n_p$}   
{
    $r \gets y + \log\left\{ C \exp \left\{ -p \right\}  \right\} $ \\
    
    $\begin{aligned}
    g \gets & - \text{diag}\left( \exp \left\{ -p \right\} \right) C^{\top} \text{diag}\left( C \exp \left\{ -p \right\} \right)^{-1} D r\\
     & + \dfrac{1}{\sigma^2} (p - \tilde{p})
    \end{aligned}$ \\
        
    $\eta = \eta_0$ \;
    \While{$f_d(p - \eta g)>f_d(p)-\eta \epsilon \norm{g}^2$}
    {
        $\eta \gets \eta/2$ \;
    }
    $ p \gets p - \eta g $ \;
}
$ p^{*} \gets p $

\caption{Computation of the deblurring function $\tilde{F}_d$}
\label{algo:F_d_gradientDescent}
\end{algorithm}

Algorithm~\ref{algo:F_t} outlines the computation of the tomographic reconstruction function, $\tilde{F_t} (\tilde{p}; x_{\text{init}})$ that performs a partial update minimization of equation~(\ref{eq:op_F_t}) starting from an initial value of $x_{\text{init}}$.
The optimization can be performed using any off-the-shelf software module that can perform regularized inversion.
More implementation details are given in section~\ref{sec:results}.

\begin{algorithm}

\DontPrintSemicolon
\KwIn{Initial reconstruction: $ x_{\text{init}} $ \newline
Projections data: $\tilde{p}$ \newline
}
\KwOut{Final reconstruction: $ x^{*} $}

Initialize: $x \gets x_{\text{init}} $ 

\For{$i \gets 1$ to $n_t$}   
{
    $e \gets p-Ax $ \;
    $x \gets x + \text{Update}(e, A) $ \;
}
$ x^{*} \gets x $

\caption{Computation of the tomographic reconstruction function $\tilde{F}_t$}
\label{algo:F_t}
\end{algorithm}

Even though the coded deblurring and CT reconstruction problems are tightly coupled, the modular structure of Algorithm~(\ref{algo:coded_updates2}) separates them into deblurring and CT reconstruction sub-problems that must be solved repeatedly until convergence.
This allows us to use existing algorithms for iterative Bayesian tomographic reconstruction, which we believe is an important advantage of CodEx.

Note that the deblurring operator $F_d$ performs the optimization in equation~\eqref{eq:op_F_d} purely in the projection domain and is independent of the CT geometry.
Only the tomographic reconstruction operator $F_t$, and the forward projection operator $A$ in Algorithm~(\ref{algo:coded_updates2}) depend on the CT geometry under consideration. 
Our approach can therefore be easily extended to other CT geometries by incorporating a different reconstruction operator $F_t$, and a forward projection operator $A$ specific to that CT geometry.

\section{Interlaced View Sampling}
\label{sec:interlaced_sampling}

Recently proposed interlaced view sampling schemes~\cite{mohan2015timbir,zang2018space} have demonstrated improved reconstruction quality for time-resolved tomography compared to traditional progressive view sampling.
In interlaced view sampling, the view measurements are collected over multiple rotations of the object, rather than a single rotation as with progressive sampling.
This allows a wider range of angular measurements per unit time, thereby improving the reconstruction quality for time-resolved tomography~\cite{mohan2015timbir}.

Next we define the interlaced view sampling we will use for CodEx.
Let $M_\theta$ denote the number of acquired views with angles given by
\begin{equation}
    \theta_i = \frac{\pi i K }{N_\theta} \ \text{, for $i=0,\cdots,M_\theta-1$},
\label{eq:angles}
\end{equation}
where $K$ is the code length of the coded exposure, and $N_\theta$ is the number of unique micro-projection angles in $[0,\pi]$.
Using equation~\eqref{eq:angles}, we can ensure that every view angle is unique by choosing $K$ and $N$ to be relatively prime so that $\text{gcd}(K, N_\theta) = 1$.
Theorem~\ref{theorem_unique_angles} in Appendix~\ref{sec:apendix} proves this result.

When the total number of views is equal to the number of micro-projection angles, so that $M_\theta=N_\theta$, then we say that the view sampling is dense. 
In this case, all the $N_\theta$ densely-spaced micro-projection angles are sampled.

In order to ensure $\text{gcd}(K, N_\theta) = 1$, we choose $N_\theta$ in the following way
\begin{equation}
    N_\theta = mK - n \ ,
\label{eq:MN_coprime}
\end{equation}
where the positive integers $m$ and $n$ are chosen so that $K$ and $n$ are co-prime.
More specifically, Theorem~\ref{theorem_N_theta} in Appendix~\ref{sec:apendix} shows that when $\text{gcd}(K, n) = 1$ then $\text{gcd}(K, N_\theta) = 1$.

Once a suitable $n$ is chosen that is co-prime to $K$, $m$ can be adjusted to tune the angular spacing between view-angles.
For a small $n$, the view-angles $\theta_i$ in equation~\eqref{eq:angles} are roughly separated by $\frac{\pi}{m}$.

Some typical choices of the parameters $K$, $m$, $n$ is given in Table~\ref{table:angle_params}.

\begin{table}[h!]
\centering
\begin{tabular}{|l | c | c | c | r|} 
\hline
Code-length &&&& Blur-angle\\
 $K$ & $m$ & $n$ & $N_\theta = mK - n $ & $\Delta\theta = \frac{K 180^\circ}{N_\theta} $ \\ 
&&&&\\
\hline
$52$ & $2$ & $27$ & $77$ & $121.56^\circ$ \\ 
$52$ & $5$ & $27$ & $233$ & $40.17^\circ$ \\ 
$52$ & $10$ & $27$ & $493$ & $18.98^\circ$
\\
$52$ & $20$ & $27$ & $1013$ & $9.24^\circ$
\\
\hline
\end{tabular}
\\
\vspace{1mm}
\caption{Typical parameter choices for view-angle sampling}
\label{table:angle_params}
\end{table}

Fig~\ref{fig:view_angles} graphically illustrates the interlaced view-angle scheme in equation~\eqref{eq:angles} for the case when $m=5$, $n=5$, and $K=11$. 
In this case, there are $N_\theta = mK-n=50$ distinct micro-projection angles in the range of $[0,\pi]$.
Notice that each new measurement angle shown by a blue dot is spaced by $K=11$ discrete angular steps, but the measurement angles do not repeat until all $N_\theta = 50$ distinct micro-projection angles are sampled.

\begin{figure}[ht]
\centering     
\includegraphics[width=0.50\textwidth]{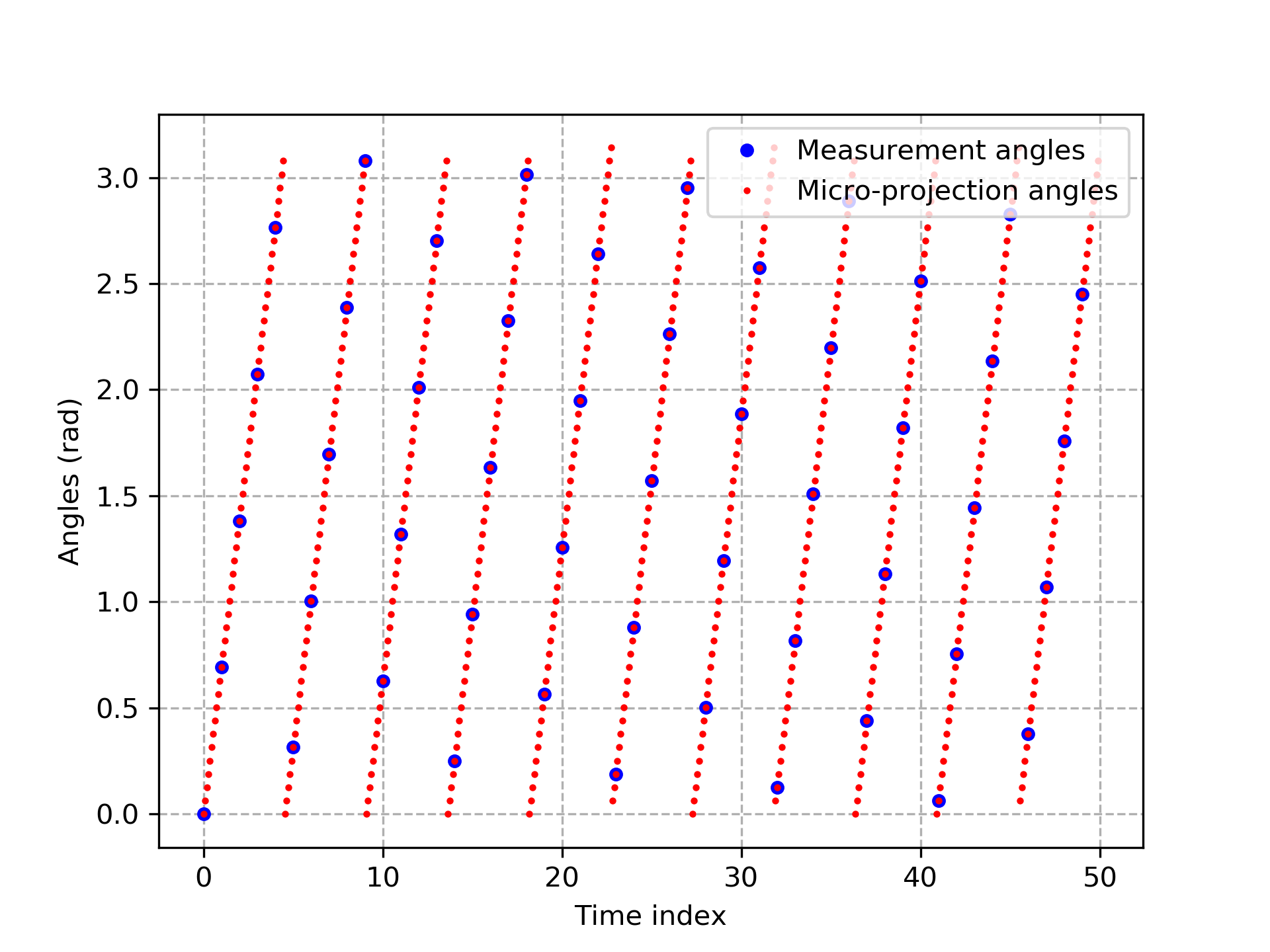}
\caption{Illustration of our View angle sampling for coded exposure CT for $N_\theta = mK-n = 50$ ($K=11$, $m=5$, $n=5$) and $M_\theta = 50$.
Each new measurement angle shown by a blue dot is spaced by $K=11$ discrete angular steps, but the measurement angles do not repeat until all $N_\theta = 50$ distinct micro-projection angles (blue dots) are sampled.
}
\label{fig:view_angles}
\end{figure}

Using the coded acquisition model in equation~\eqref{eq:coded_sum}, the expected photon-counts at the detector for the proposed interlaced views can be written as
\begin{align}
\label{eq:lambda_explicit}
    \bar{\lambda}_i = \lambda^0 \sum_{k=0}^{K-1} c_k \exp\left\{- \left(A_{\frac{\pi (i K +k) }{N_\theta} } \right) x\right\} & ,
    \notag
    \\
    \text{for $i=0,\cdots,M_\theta-1$} &,
\end{align}
where $\bar{\lambda}_i \in \mathbb{R}^{M_d}$ is the vector of expected photon-counts at the $M_d$ detector pixels, $\lambda^0$ is the photon-flux of the X-ray source, $c = [c_0, c_1, \cdots, c_{K-1}]$ is the binary code used to modulate the photons, and $A_{\frac{\pi i K }{N_\theta}} \in \mathbb{R}^{M_d \times N}$ performs the forward projection of the image $x$ at angle $\frac{\pi i K }{N_\theta}$.

Equation~\eqref{eq:lambda_explicit} reduces to the vectorized form in equation~\eqref{eq:lambda_abstract} with the coding matrix $C$ defined as
\begin{align}
    &C_{i(i_\theta, i_r, i_c ), j(j_\theta, j_r, j_c )} \\
    & = \begin{cases}
    \frac{c_k}{\bar{c}} & \text{if } \text{mod}(i_\theta K+k,N_\theta) = j_\theta \text{ for some } 0 \leq k < K \ , \\
    & \text{and } i_r=j_r \text{, and } i_c=j_c \\
    0 & \text{otherwise}
\end{cases} \ , \notag
\end{align}
where mod$()$ denotes the modulo operation and $i(i_\theta, i_r, i_c )$ represents the rasterized index $i$ as a function of the angular index $i_\theta$, row index $i_r$, and column index $i_c$.
Similarly, $j(j_\theta, j_r, j_c )$ represents the rasterized index $j$ as a function of the angular index $j_\theta$, row index $j_r$, and column index $j_c$.

\section{Results}
\label{sec:results}

We present results using simulated and binned physical data in order to demonstrate the effectiveness for our CodEx approach.
Our primary area of focus is parallel beam synchrotron imaging of objects found in material science and industrial CT~\cite{withers2021x}.
Accordingly, we consider parallel-beam sparse-view fast-rotation fly-sanning CT experiments representative of synchrotron imaging.

\subsection{Methods}

Below we describe the methods used in our experiments.
For the experimental results, we consider the three exposure codes:
\begin{itemize}
\item {\bf Snapshot code:} Uses code $(1,0,\cdots,0)$ of Figure~\ref{fig:code}(a);
\item {\bf Boxcar code:} This uses the box car code $(1,1,\cdots,1)$ of Figure~\ref{fig:code}(b).
\item {\bf Raskar code:} This uses the length 52 Raskar code~\cite{raskar2006coded} of Figure~\ref{fig:code}(c).
\end{itemize}
Notice that the snapshot code is equivalent to taking a very short exposure snapshot in that it freezes motion but limits the photon count.
When a longer code was required, we simply concatenated the Raskar code.
However, in practice code might be custom designed to optimize performance for CodEx reconstruction.

\begin{figure*}[!ht]
\centering     
\subfigure[Snapshot code]{\includegraphics[trim={2cm
0 2cm 0},clip,width=0.3\textwidth]
{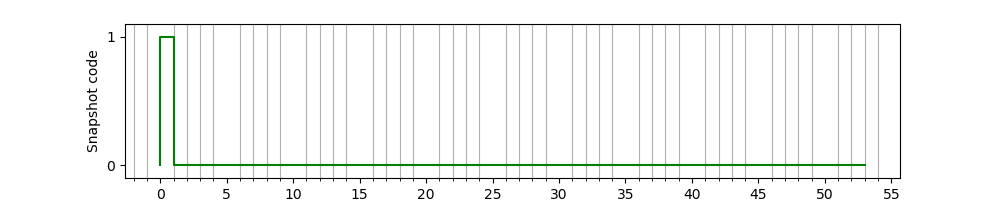}}
\subfigure[Boxcar code]{\includegraphics[trim={2cm
0 2cm 0},clip,width=0.3\textwidth]
{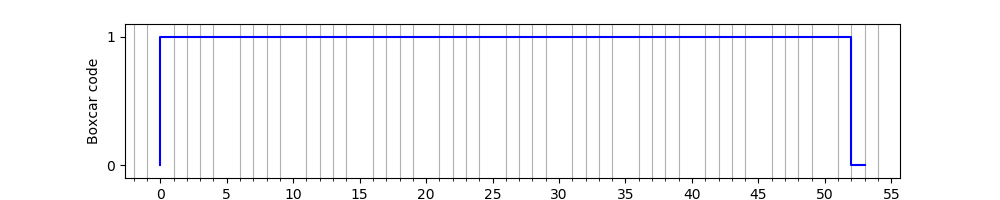}}
\subfigure[Raskar code]{\includegraphics[trim={2cm
0 2cm 0},clip,width=0.3\textwidth]
{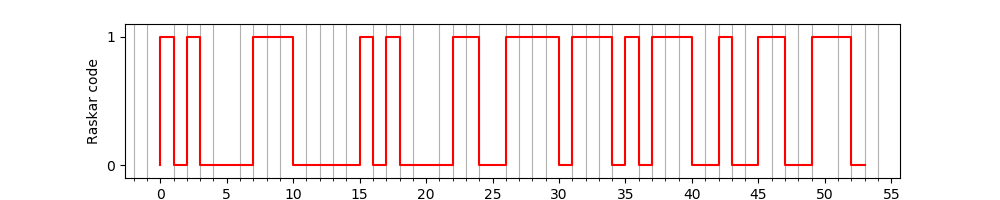}}
\caption{Examples of the binary codes of length $52$ that are used for modulating the photon-flux in our experiments.
}
\label{fig:code}
\end{figure*}

As baselines for comparison, we consider three alternative reconstruction methods.
\begin{itemize}
    \item {\bf MBIR:} Conventional MBIR reconstruction without deblurring of the measured views using photon weighting.
    \item {\bf IFBP:} Linear interpolation/deblurring of the sparse view data followed by FBP reconstruction.
    \item {\bf CodEx:} Full nonlinear CodEx reconstruction as described in the paper.
\end{itemize}
Note that the optimal linear interpolation of the views is done by solving the least squares problem
\begin{equation}
    p^*_{\text{deblur}} =  \argmin_{p} \Bigg\{ \dfrac{1}{2} \norm{ y - C p}^2  \Bigg\} \ .
\label{eq:serial_deblur}
\end{equation}

All MBIR reconstructions, including those used in the CodEx algorithm, are performed performed using the \mbox{SVMBIR} open-source code package \cite{svmbir-2020}.
The svmbir implementation uses a Markov random field based regularization and the reconstruction is computed using a cache optimized iterative coordinate descent~\cite{wang2016fast,wang2017massively}.

All tomographic data simulations are performed assuming the nonlinear forward model, the specified code, and Poisson photon counting statistics when noise is included.
This is done regardless of the reconstruction method.
When computing the CodEx reconstruction, we run $n_p = 5$ partial updates for computing the $\tilde{F}_d$ step and $n_t = 5$ partial updates for computing the $\tilde{F}_t$ step in Algorithm~\ref{algo:coded_updates2}.
For computing the baseline MBIR reconstruction, we use $400$ iterations.

Measurements of the modulation transfer function (MTF) metric~\cite{friedman2013simple} along the tangential and radial directions are performed using the ``Siemens star'' and ``concentric circle'' phantoms of Figures~\ref{fig:simresults_MTF_ang}(a) and~\ref{fig:simresults_MTF_rad}(a), respectively.
In both cases, we compute the MTF via the following steps: Take a line profile of the reconstruction perpendicular to an edge to obtain the edge spread function (ESF); Differentiate the ESF to obtain the line spread function (LSF); Compute the fast fourier transform (FFT) of the Hamming windowed LSF; Obtain the MTF by taking the absolute value of the FFT output.

\begin{figure*}[!ht]
\centering     
\subfigure[Phantom]{\includegraphics[clip,width=0.135\textwidth]
{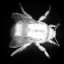}}
\subfigure[MBIR-snapshot ]{\includegraphics[clip,width=0.135\textwidth]
{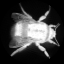}}
\subfigure[MBIR-boxcar ]{\includegraphics[clip,width=0.135\textwidth]
{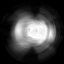}}
\subfigure[MBIR-Raskar ]{\includegraphics[clip,width=0.135\textwidth]
{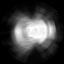}}
\subfigure[CodEx-snapshot ]{\includegraphics[clip,width=0.135\textwidth]
{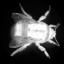}}
\subfigure[CodEx-boxcar ]{\includegraphics[clip,width=0.135\textwidth]
{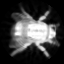}}
\subfigure[CodEx-Raskar ]{\includegraphics[clip,width=0.135\textwidth]
 {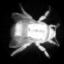}}
\caption{Comparison of reconstruction quality for simulated data without noise.
Experimental parameters are in Table~\ref{table:bumblebee_noiseless_setup}.
Each reconstruction uses either MBIR or CodEx reconstruction and uses one of three possible codes: snapshot, boxcar, or Raskar.
Notice that MBIR produces a blurred image when used with boxcar or Raskar codes.
In contrast, CodEx produces a sharp reconstruction in these two cases. 
MBIR and CodEx are equivalent when the snapshot code is used.
Also notice that the CodEx-Raskar reconstruction is sharper than the CodEx-boxcar reconstruction.
}
\label{fig:simresults_bumblebee_noiseless}
\end{figure*}

\subsection{Simulated Data Without Noise}

In this section, we perform reconstructions from simulated experiments without Poisson noise.
Details of the experimental parameters are given in Table~\ref{table:bumblebee_noiseless_setup}.

\begin{table}[!ht]
\centering{} 
\small
\begin{tabular}{r|l}
\toprule
Dosage: $\lambda^0$ & $\infty$ \\
Unique Micro-projection Angles: $N_{\theta}$  & 233 \\
Number of Views: $M_{\theta}$  & 233 \\
Code Length: $K$ & 52 \\
Blur Angle & 40.17$^\circ$ \\
Angular View Span & 25.89 rotations \\
Reconstruction Shape & 64$\times$64 \\
\bottomrule
\hline
\end{tabular}
\\
\vspace{1mm}
\caption{Parameters for simulated experiment without noise}
\label{table:bumblebee_noiseless_setup}
\end{table}

Figure~\ref{fig:simresults_bumblebee_noiseless} shows a qualitative comparison of CodEx with the baseline MBIR with different coded exposures.
Notice that MBIR produces a very blurry image when used with boxcar or Raskar codes.
This is because each view is acquired over a $40.17^\circ$ angle.
In contrast, CodEx produces a sharp reconstruction in these two cases. 
MBIR and CodEx are equivalent when the snapshot code is used since this code results in a very short exposure the freezes the rotation motion.
Importantly, the the CodEx-Raskar reconstruction is sharper than the CodEx-boxcar reconstruction.
The CodEx-boxcar reconstruction has noticeable angular blur, especially further from the center of rotation, which is reduced in the CodEx-Raskar reconstruction.

Figure~\ref{fig:simresults_MTF_ang} and Figure~\ref{fig:simresults_MTF_rad} compare the MTF along the tangential and radial directions for the boxcar and Raskar codes using CodEx reconstruction.
Each MTF curve is computed for both a small and large radial value.
Notice that the tangential MTF of CodEx-Raskar is much better than CodEx-boxcar.
This is reasonable since the effect of object rotation of blur increases with distance from the center of rotation.
Alternatively, the MTF in the radial direction is similar for both codes.

\begin{figure*}[!ht]
\raggedleft
\subfigure[Phantom]{\includegraphics[clip,width=0.19\textwidth]
{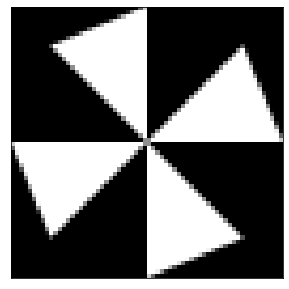}}
\subfigure[CodEx-boxcar reconstruction
]{\includegraphics[clip,width=0.19\textwidth]
{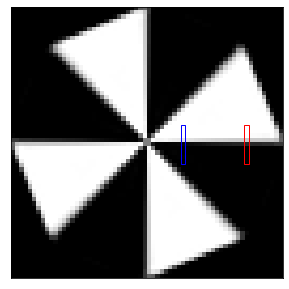}}
\subfigure[CodEx-boxcar MTF
]{\includegraphics[clip,width=0.19\textwidth]
{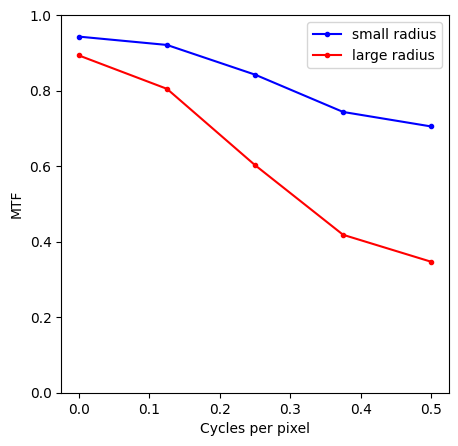}}
\subfigure[CodEx-Raskar reconstruction]{\includegraphics[clip,width=0.19\textwidth]
{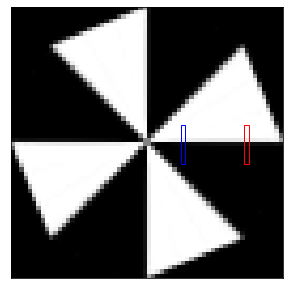}}
\subfigure[CodEx-Raskar MTF]{\includegraphics[clip,width=0.19\textwidth]
{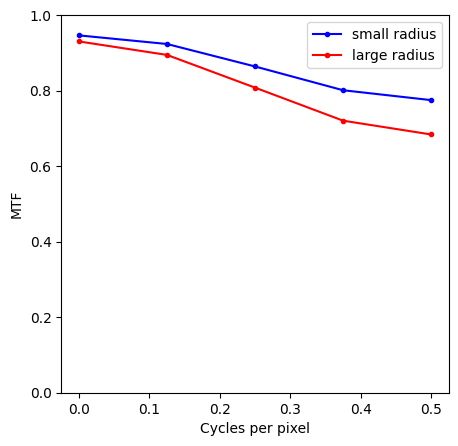}}
\caption{The tangential MTF for CodEx-Raskar and CodEx-boxcar reconstructions.
In comparison to CodEx-Raskar, the MTF value for CodEx-boxcar has a sharper decrease at higher frequencies.
Furthermore, the change in the MTF curve is higher for CodEx-boxcar compared to CodEx-Raskar.
These indicate better resolution of CodEx-Raskar in the tangential direction compared to CodEx-boxcar.
}
\label{fig:simresults_MTF_ang}
\end{figure*}

\begin{figure*}[!ht]
\raggedleft
\subfigure[Phantom]{\includegraphics[clip,width=0.19\textwidth]
{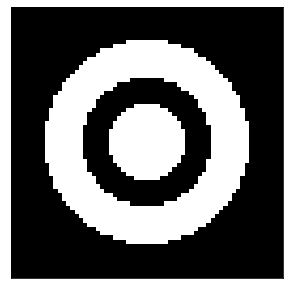}}
\subfigure[CodEx-boxcar reconstruction
]{\includegraphics[clip,width=0.19\textwidth]
{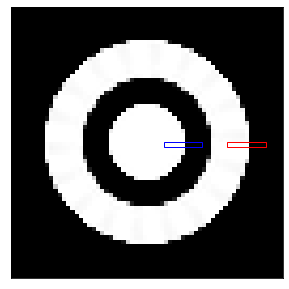}}
\subfigure[CodEx-boxcar MTF
]{\includegraphics[clip,width=0.19\textwidth]
{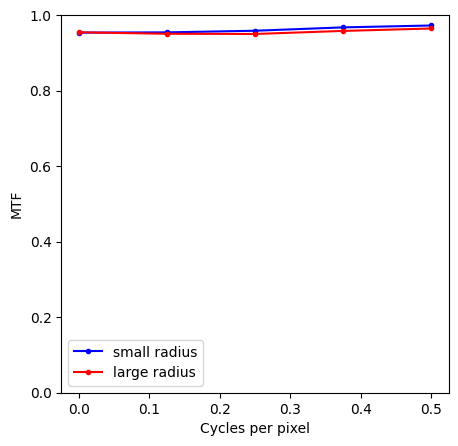}}
\subfigure[CodEx-Raskar reconstruction]{\includegraphics[clip,width=0.19\textwidth]
{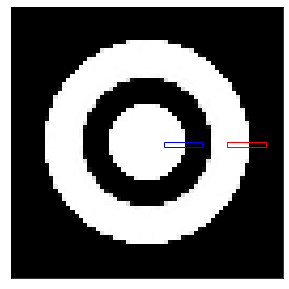}}
\subfigure[CodEx-Raskar MTF]{\includegraphics[clip,width=0.19\textwidth]
{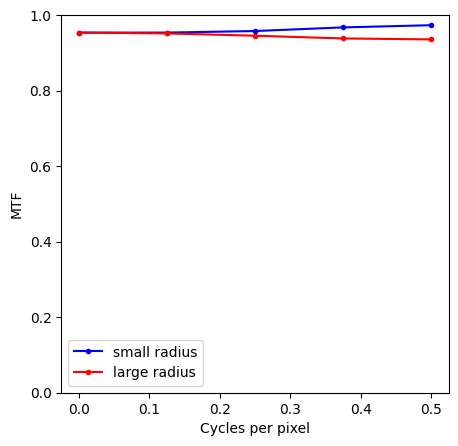}}
\caption{The radial MTF for CodEx-Raskar and CodEx-boxcar reconstructions.
Both CodEx-Raskar and CodEx-boxcar reconstructions exhibit similar behaviour in the MTF curves indicating similar resolution along the radial direction.
}
\label{fig:simresults_MTF_rad}
\end{figure*}

\subsection{Simulated Data With Poisson noise}

\begin{figure*}[!ht]
\subfigure[Phantom]{\includegraphics[clip,width=0.135\textwidth]
{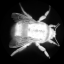}}
\subfigure[MBIR-snapshot
]{\includegraphics[clip,width=0.135\textwidth]
{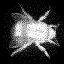}}
\subfigure[MBIR-boxcar ]{\includegraphics[clip,width=0.135\textwidth]
{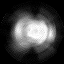}}
\subfigure[MBIR-Raskar
]{\includegraphics[clip,width=0.135\textwidth]
{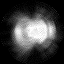}}
\subfigure[CodEx-snapshot ]{\includegraphics[clip,width=0.135\textwidth]
{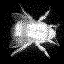}}
\subfigure[CodEx-boxcar
]{\includegraphics[clip,width=0.135\textwidth]
{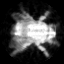}}
\subfigure[CodEx-Raskar ]{\includegraphics[clip,width=0.135\textwidth]
{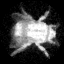}}
\caption{Comparison of reconstruction quality for simulated data with photon noise. 
Experimental parameters are given in Table~\ref{table:noisy_setup}.
For Raskar and boxcar codes, the baseline MBIR reconstruction leads to a blurred image.
In contrast, CodEx-Raskar and CodEx-boxcar reconstructions do not suffer from severe blurring artifacts.
The reconstructions with snapshot code suffer from high noise due to the limited photon counts in the measurements.
The reconstructions with boxcar code suffer from radial blur artifacts and loss of fine features due to the non-invertible nature of the blur kernel.
}
\label{fig:simresults_noisy}
\end{figure*}

\begin{figure}[!ht]
\centering
\subfigure[Primal Residual: RMSE($Ax^t$, $p^t$)]{\includegraphics[clip,width=0.22\textwidth]
{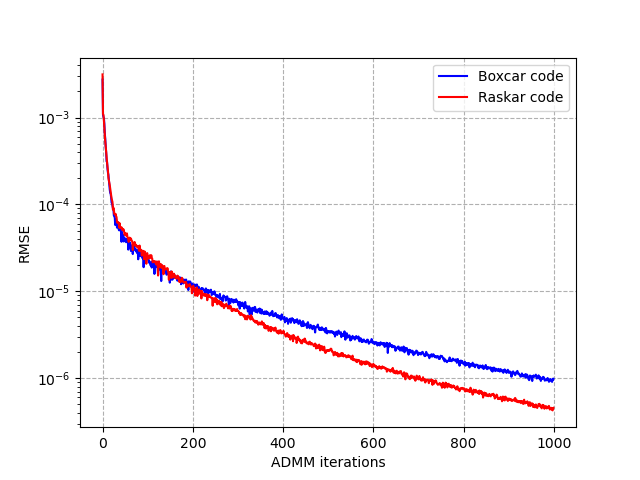}}
\subfigure[Dual Residual: RMSE($Ax^t$, $Ax^{t-1}$)]{\includegraphics[clip,width=0.22\textwidth]
{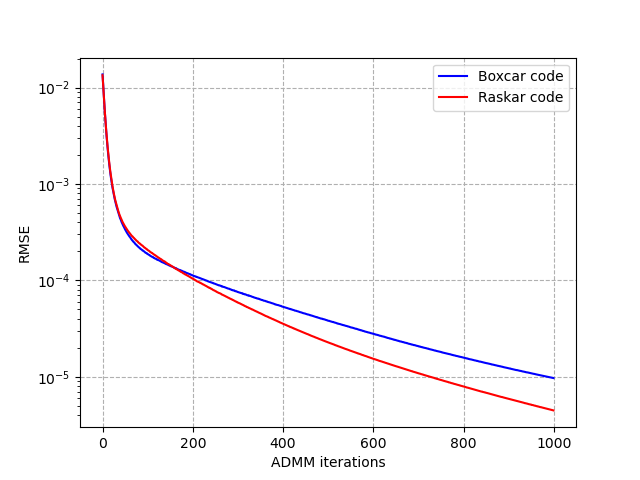}}
\caption{Primal and dual residual convergence plots for simulated data with photon noise.
Here $t$ refers to the ADMM iteration number.
The Raskar code leads to slightly improved convergence than the boxcar code.
}
\label{fig:convplots_noisy}
\end{figure}

\begin{figure*}[!ht]
\centering
\subfigure[Code length $52$ (Blur angle $40.17^{\circ}$)]{\includegraphics[clip,width=\textwidth]
{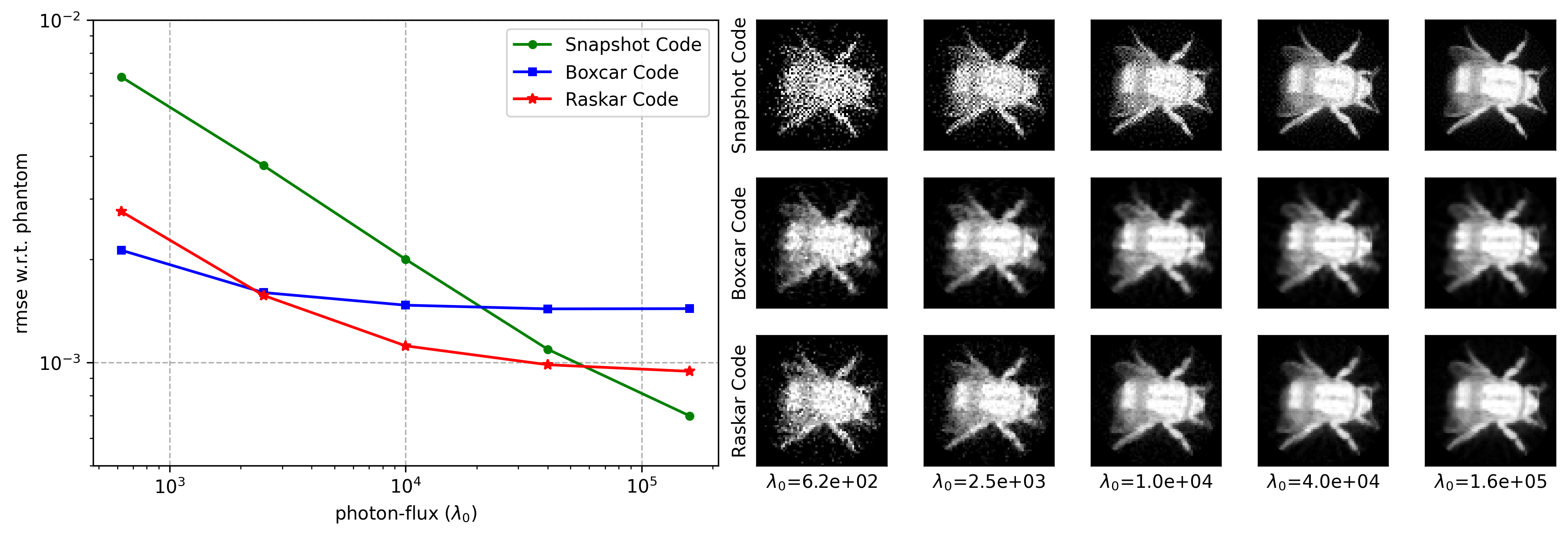}}
\subfigure[Code length $104$ (Blur angle $80.34^{\circ}$)]{\includegraphics[clip,width=\textwidth]
{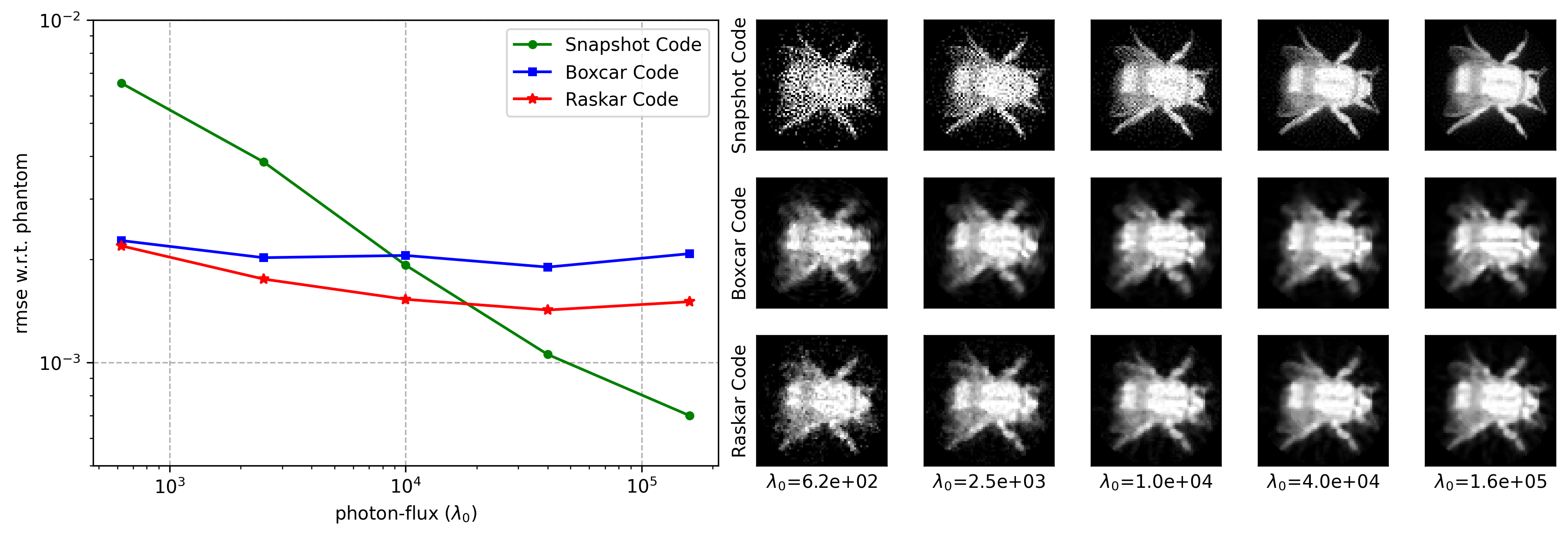}}
\subfigure[Code length $208$ (Blur angle $160.68^{\circ}$)]{\includegraphics[clip,width=\textwidth]
{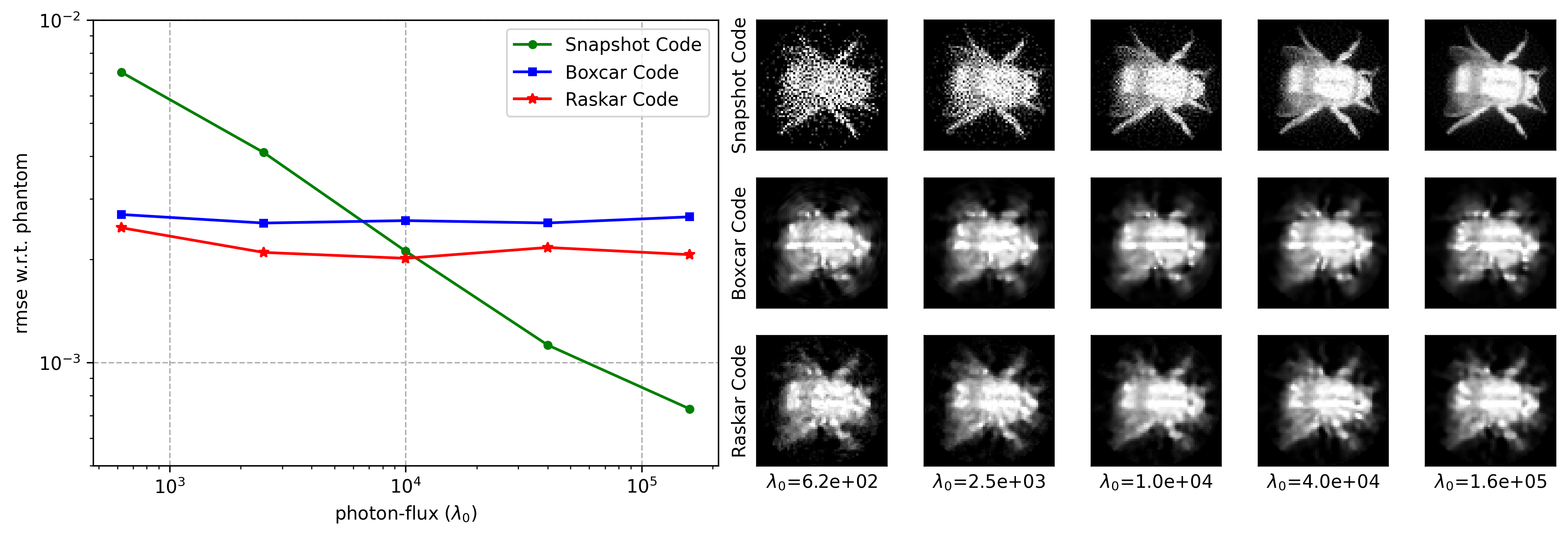}}
\caption{Effect of the photon-flux ($\lambda^0$) and the code length on reconstruction quality for different codes.
The remaining experimental parameters are kept the same as Table~\ref{table:noisy_setup}.
When the photon-flux ($\lambda^0$) is low, the boxcar code and Raskar code produce better image quality than the snapshot code as a result of collecting more photons.
When the photon-flux ($\lambda^0$) is high, the minor gains from increased photon count do not fully compensate for the loss of information by inverting the coded blur.
Consequently at high photon-flux, the snapshot code produce better image quality than the boxcar code and Raskar codes.
}
\label{fig:noise_plot}
\end{figure*}

In this section, we perform simulated experiments with more realistic Poisson noise using equations~\eqref{eq:lambda_abstract} and~\eqref{eq:pois}.
The experimental parameters are summarized in Table~\ref{table:noisy_setup}.

\begin{table}[!ht]
\centering{} 
\small
\begin{tabular}{r|l}
\toprule
Dosage: $\lambda^0$ & 10,000 \\
Unique Micro-projection Angles: $N_{\theta}$  & 233 \\
Number of Views: $M_{\theta}$  & 100 \\
Code Length: $K$ & 52 \\
Blur Angle & 40.17$^\circ$ \\
Angular View Span & 11.05 rotations \\
Reconstruction Shape & 64$\times$64 \\
\bottomrule
\hline
\end{tabular}
\\
\vspace{1mm}
\caption{Parameters for simulated experiment with noise}
\label{table:noisy_setup}
\end{table}

Figure~\ref{fig:simresults_noisy} shows a qualitative comparison of CodEx with the baseline MBIR for different coded exposures and Poisson noise.
Notice that these results are similar to those without noise, with the MBIR-boxcar and MBIR-Rashar reconstruction being very blurry.
However, in this case, the snapshot code yeilds a very noisy reconstruction for both the MBIR and CodEx reconstructions. This illustrates the weakness of the snapshot acquistion, i.e., that it wastes photons by using a very short exposure time.
So in this case, the CodEx-Raskar result has the advantage of lower noise while maintaining detail.

In Figure~\ref{fig:convplots_noisy}, we plot the primal residual, RMSE($Ax^t$, $p^t$) and dual residual, RMSE($Ax^t$, $Ax^{t-1}$) \cite{boyd2011distributed} at each ADMM iteration to illustrate the convergence of our method. 
Here $t$ refers to the ADMM iteration number.
Here we see that the Raskar code leads to a slightly improved convergence speed relative to the boxcar code.

In Figure~\ref{fig:noise_plot}, we show the effect of the photon-flux ($\lambda^0$) and the code length on reconstruction quality for different codes.
For each code type (snapshot, boxcar, and Raskar), we vary the photon-flux ($\lambda^0$) and plot the reconstruction RMSE relative to ground truth as a measure image quality.
We also show the resulting image for a visual inspection of image quality.
We repeat this process for each code lengths of $52$, $104$, and $208$.\footnote{The Raskar codes were extended by concatenating the basic length 52 code.}
The remaining experimental parameters are listed in Table~\ref{table:noisy_setup}.

Notice that when the photon-flux ($\lambda^0$) is low, the CodEx-boxcar and CodEx-Raskar produce better image quality than the CodEx-snapshot because they collect more photons.
Moreover, when the low photon-flux is low, CodEx-Raskar method benefits from the longer blur angle resulting from a longer code because it can collect more photons.
Alternatively, when the photon-flux ($\lambda^0$) is high, the minor gains from increased photon count do not fully compensate for the losses incurred when inverting the coded blur.
In this case, the MBIR or CodEx result using the snapshot code is superior to the CodEx-boxcar or CodEx-Raskar reconstructions.

\subsection{Short Duration Scan with Poisson Noise}

In this section we simulated a short duration scan in order to quantify the advantages of CodEx when scanning a dynamic object. To do this, we consider three scanning acquisition methodologies:
\begin{itemize}
    \item Slow-scan with parameters shown in~Table~\ref{table:dynamic_slow};
    \item Fast-scan with parameters shown in~Table~\ref{table:dynamic_fast};
    \item Coded-scan with parameters shown in~Table~\ref{table:dynamic_coded}.
\end{itemize}
Each scan mode is assumed to acquire either 20 or 40 views. We note that sparse view sampling is often used because it reduces motion artifacts by reducing the total scan time. The scan parameters are also adjusted so that the total photon count per view is constant. This is consistent with a physical experiment that uses a fixed scan time per view and a constant flux X-ray source.

Intuitively, the slow scan assumes that the object is rotated slowly, so the blur for each view is very small. However, the slow rotation means that the angular span is much less than the required $180^\circ$. Consequently, the slow scan does not cause blur, but it suffers from severe limited view artifacts.

The fast scan and coded scan both assume that the object is rotated rapidly. In this case, the views are blurred by $9.24^\circ$, but the angular span of the views greater than or equal to $180^\circ$. Consequently, the fast scan creates blurred view data, but it does not suffer from limited view artifacts.

\begin{table}[!ht]
\centering{} 
\small
\begin{tabular}{r|l}
\toprule
$\lambda^0$ & 520,000 \\
Unique Micro-projection Angles: $N_{\theta}$  & 1,013 \\
Number of Views: $M_{\theta}$  & 20 or 40 \\
Code Length: $K$ & 1 \\
Code Type & Snapshot \\
Blur Angle & 0.18$^\circ$ \\
Angular View Span & 3.38$^\circ$ or 6.93$^\circ$ \\
Reconstruction Shape & 128$\times$128 \\
\bottomrule
\hline
\end{tabular}
\\
\vspace{1mm}
\caption{Slow Scan Parameters}
\label{table:dynamic_slow}
\end{table}

\begin{table}[!ht]
\centering{} 
\small
\begin{tabular}{r|l}
\toprule
$\lambda^0$ & 10,000 \\
Unique Micro-projection Angles: $N_{\theta}$  & 1,013 \\
Number of Views: $M_{\theta}$  & 20 or 40 \\
Code Length: $K$ & 52 \\
Code Type & Boxcar \\
Blur Angle & 9.24$^\circ$ \\
Angular View Span & 175.56$^\circ$ or 360.36$^\circ$ \\
Reconstruction Shape & 128$\times$128 \\
\bottomrule
\hline
\end{tabular}
\\
\vspace{1mm}
\caption{Fast Scan Parameters}
\label{table:dynamic_fast}
\end{table}

\begin{table}[!ht]
\centering{} 
\small
\begin{tabular}{r|l}
\toprule
$\lambda^0$ & 10,000 \\
Unique Micro-projection Angles: $N_{\theta}$  & 1,013 \\
Number of Views: $M_{\theta}$  & 20 or 40 \\
Code Length: $K$ & 52 \\
Code Type & Raskar \\
Blur Angle & 9.24$^\circ$ \\
Angular View Span & 175.56$^\circ$ or 360.36$^\circ$ \\
Reconstruction Shape & 128$\times$128 \\
\bottomrule
\hline
\end{tabular}
\\
\vspace{1mm}
\caption{Coded Scan Parameters}
\label{table:dynamic_coded}
\end{table}

Figure~\ref{fig:simresults_dynamic} compares the results of 20 and 40 view scans using the following combinations of scan modes and reconstruction algorithms:
\begin{itemize}
    \item slow-MBIR: slow scan acquisition using MBIR reconstruction;
    \item fast-MBIR: fast scan acquisition using MBIR reconstruction;
    \item fast-IFBP: fast scan acquisition using IFBP reconstruction;
    \item fast-CodEx: fast scan acquisition using CodEx reconstruction;
    \item coded-CodEx: coded scan acquisition using CodEx reconstruction;
\end{itemize}
For each scanning/reconstruction combination we report the normalized root mean squared error (NRMSE) with respect to the phantom where the NRMSE of an image $x$ with respect to the phantom $x^0$ is defined as $\norm{x-x^0}/\norm{x^0}$.
Figure~\ref{fig:simresults_phantom_dynamic} shows the phantom used for the simulated experiments.

Notice that the slow scans of Figure~\ref{fig:simresults_dynamic}(a) and~(f) both have very poor quality even though they use MBIR reconstruction. This is because they have insufficient angular view range. Also notice that the fast scans in Figure~\ref{fig:simresults_dynamic}(b) and~(g) have poor quality. This is because the naive application of MBIR does not account for the blur caused by the object's fast rotation.

Figure~\ref{fig:simresults_dynamic}(c) and~(h) show the result of least squares linear deblurring followed by filtered back projection. This result is better than naive reconstruction with MBIR. However, this approach has two limitations. First, since the views are sparsely sampled, the blurring can not be fully removed due to aliasing. Second, the reconstruction does not account for the nonlinearity of the CT forward model.

Figure~\ref{fig:simresults_dynamic}(d), (i), (e) and~(j) compare the result of CodEx with the boxcar and Raskar codes using 20 and 40 views. 
Notice that with 20 views, the boxcar code is slightly better and with 40 views the Raskar code is slightly better. 
Intuitively, this is because with 20 views, the Raskar code wastes some photons. 
However, with 40 views, the angular view range increases to $360.36^\circ$, and the redundant view information along with the Raskar code can be exploited to recover additional resolution.

\begin{figure*}[!ht]
\centering     
\raggedleft
\subfigure[20 view slow-MBIR (NRMSE=0.5967) ]{\includegraphics[clip,width=0.19\textwidth]
{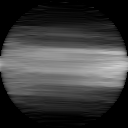}}
\subfigure[20 view fast-MBIR (NRMSE=0.1765) ]{\includegraphics[clip,width=0.19\textwidth]
{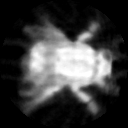}}
\subfigure[20 view fast-IFBP (NRMSE=0.1774) ]{\includegraphics[clip,width=0.19\textwidth]
 {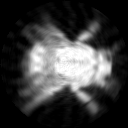}}
\subfigure[20 view fast-CodEx (NRMSE=0.1556)]{\includegraphics[clip,width=0.19\textwidth]
{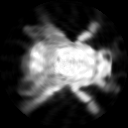}}
\subfigure[20 view coded-CodEx (NRMSE=0.1605)
]{\includegraphics[clip,width=0.19\textwidth]
{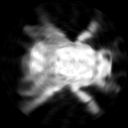}}

\subfigure[40 view slow-MBIR (NRMSE=0.5220) ]{\includegraphics[clip,width=0.19\textwidth]
{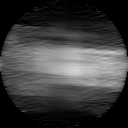}}
\subfigure[40 view fast-MBIR (NRMSE=0.1462) ]{\includegraphics[clip,width=0.19\textwidth]
{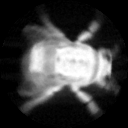}}
\subfigure[40 view fast-IFBP (NRMSE=0.1207) ]{\includegraphics[clip,width=0.19\textwidth]
 {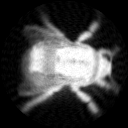}}
\subfigure[40 view fast-CodEx (NRMSE=0.1037)]{\includegraphics[clip,width=0.19\textwidth]
{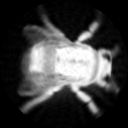}}
\subfigure[40 view coded-CodEx (NRMSE=0.0989)
]{\includegraphics[clip,width=0.19\textwidth]
{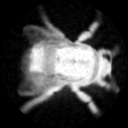}}
\caption{
Comparison of CodEx with other approaches for a simulated short duration scan.
The phantom is shown in Figure~\ref{fig:simresults_phantom_dynamic}.
For the 20 view case, the combination of fast scanning with a boxcar code and CodEx reconstruction performs best, and for the case of 40 views, the combination of fast scanning with a Raskar code and CodEx reconstruction performs best.
}
\label{fig:simresults_dynamic}
\end{figure*}

\begin{figure}
    \centering
    \includegraphics[clip,width=0.2\textwidth]{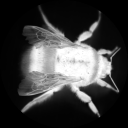}
    \caption{Phantom used for the experimental results in Figure~\ref{fig:simresults_dynamic}.}
    \label{fig:simresults_phantom_dynamic}
\end{figure}

\subsection{Binned Experimental Data}

\begin{figure*}[!ht]
\subfigure[Pseudo phantom]{\includegraphics[clip,width=0.19\textwidth]
{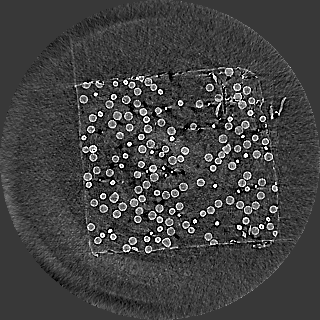}}
\subfigure[MBIR-boxcar]{\includegraphics[clip,width=0.19\textwidth]
{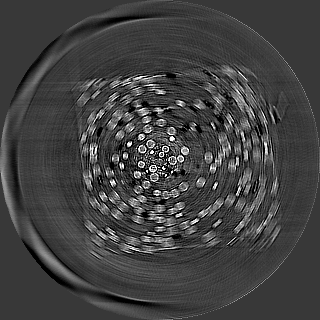}}
\subfigure[MBIR-Raskar]{\includegraphics[clip,width=0.19\textwidth]
{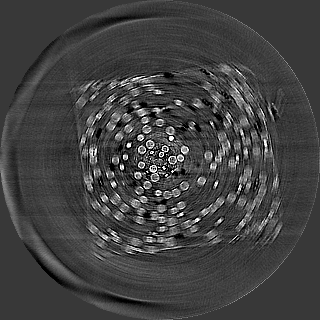}}
\subfigure[CodEx-boxcar]{\includegraphics[clip,width=0.19\textwidth]
{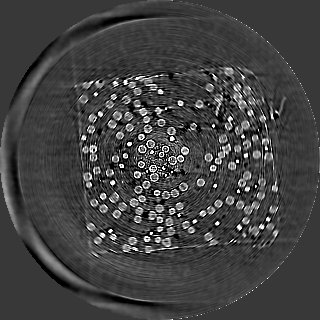}}
\subfigure[CodEx-Raskar]{\includegraphics[clip,width=0.19\textwidth]
{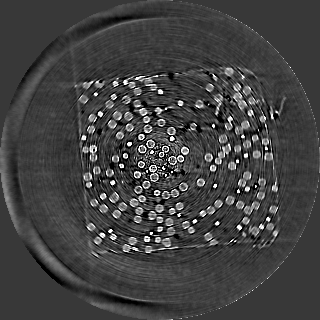}}
\caption{Comparison of reconstruction quality for binned physical data.
Experimental parameters are summarized in table~\ref{table:binsphere_setup}.
For the Raskar and boxcar codes, direct reconstruction produces a blurred image.
In contrast, CodEx is able to reconstruct the image without suffering from severe blurring.
}
\label{fig:binresults_spheres}
\end{figure*}

\begin{figure}[!ht]
\centering     
\subfigure[Primal Residual: RMSE($Ax^t$, $p^t$)]{\includegraphics[clip,width=0.24\textwidth]
{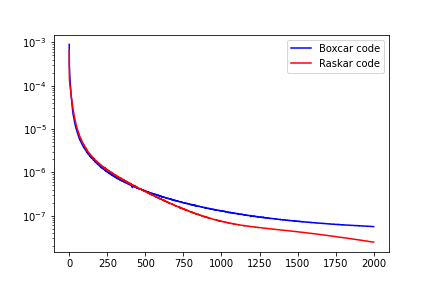}}
\subfigure[Dual Residual: RMSE($Ax^t$, $Ax^{t-1}$)]{\includegraphics[clip,width=0.24\textwidth]
{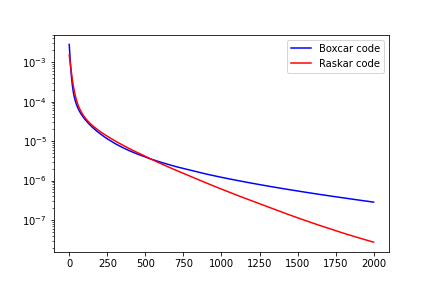}}
\caption{Primal and dual residual convergence plots for binned physical data.
Here $t$ refers to the ADMM iteration number.
The Raskar code leads to a slightly improved convergence than the boxcar code.
}
\label{fig:convplots_binsphere}
\end{figure}

In this section, we perform semi-simulated experiments by binning physical experimental data.
Binning physical data allows us to generate arbitrary coded view measurements by binning appropriate views, and we can also generate a ``pseudo phantom'' by using all the view data in a single high quality reconstruction.
The object in consideration contains borosilicate glass spheres of different sizes encased in a polypropylene matrix~\cite{singh2017varied}.
The experimental parameters are summarized in Table~\ref{table:binsphere_setup}.
In this case, we note that $K$ and $N_\theta$ have a gcd of 4, so we limit ourselves to $M_\theta =375$ views in order to avoid repetition of view-angles.

\begin{table}[!ht]
\centering{} 
\small
\begin{tabular}{r|l}
\toprule
Unique Micro-projection Angles: $N_{\theta}$  & 1,500 \\
Number of Views: $M_{\theta}$  & 375 \\
Code Length: $K$ & 52 \\
Blur Angle & 6.24$^\circ$ \\
Angular View Span & 6.48 rotations \\
Reconstruction Shape & 320$\times$320 \\
\bottomrule
\hline
\end{tabular}
\\
\vspace{1mm}
\caption{Parameters for binned physical experiment}
\label{table:binsphere_setup}
\end{table}

Using the original experimental measurements at $N_{\theta}$  distinct view-angles, we generate $M_{\theta}$ coded measurements as
\begin{align}
    y_i = -&\log\left\{ \sum_{k=0}^{K-1} \frac{c_k}{\bar{c}} \exp \left\{ -\tilde{y}_{\text{mod}(iK+k,N_{\theta})}  \right\} \right\} \ , \\ \notag
    & \text{  for } i=0,\cdots, M_{\theta}-1 \ ,
\end{align}
where $\tilde{y}_{\text{mod}(iK+k,N_{\theta})}$ is the vector of projection measurements at angle $\frac{\pi(iK+k)}{N_\theta}$ obtained from a physical experiment, $y_i$ is the vector of coded measurements at the $i^{\text{th}}$ view angle, $c = [c_0,\cdots,c_{K-1}]$ is the binary code of length $K$.

Figure~\ref{fig:binresults_spheres} shows a comparison of CodEx with the MBIR using the boxcar and Raskar codes. For the Raskar and boxcar codes, naive MBIR reconstruction produces a blurred image. In contrast, CodEx is able to reconstruct the image without suffering from severe blurring.

In Figure~\ref{fig:convplots_binsphere} we plot the primal residual, RMSE($Ax^t$, $p^t$) and dual residual, RMSE($Ax^t$, $Ax^{t-1}$) \cite{boyd2011distributed} at each ADMM iteration to illustrate the convergence. 
Here $t$ refers to the ADMM iteration number. The Raskar code leads to a slightly improved convergence than the boxcar code.

\section{Conclusion}

In this paper, we proposed CodEx, a novel method for coded exposure CT acquisition and reconstruction.
CodEx reconstruction models a) the linear blur that occurs during fast rotation; b) the nonlinear effects in transmission CT scanning; and c) the coding that can be temporally incorporated to improve resolution recovery.
Moreover, the CodEx algorithm can be implemented in a computationally efficient modular framework by using the ADMM algorithm.
Our experiments demonstrate that the CodEx method is more effective than traditional linear deblurring or MBIR reconstruction methods at improving reconstruction quality when using sparse view fly scanning of rapidly rotating objects. 
In addition, we demonstrate that temporal coding can be used to reduce blurring artifacts when scan view ranges over 180$^\circ$ are used.

\begin{appendix}
\label{sec:apendix}

\newtheorem{theorem}{Theorem}

\begin{theorem}\label{theorem_unique_angles}
All angles $\theta_i = i K \frac{\pi}{N_{\theta}} $ are unique (modulo $\pi$) if $0 \leq i \leq N_{\theta}-1$ and $\text{gcd}(K,N_{\theta})=1$.
\end{theorem}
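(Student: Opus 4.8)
The plan is to reduce the claim to an elementary divisibility statement. First I would suppose that two of the angles coincide modulo $\pi$, say $\theta_i \equiv \theta_j \pmod{\pi}$ for some indices $0 \le i,j \le N_\theta - 1$, and aim to conclude $i = j$. By the definition $\theta_i = \frac{iK\pi}{N_\theta}$, this congruence means the difference $\theta_i - \theta_j = \frac{(i-j)K\pi}{N_\theta}$ is an integer multiple of $\pi$. Dividing through by $\pi$, this is equivalent to the requirement that $\frac{(i-j)K}{N_\theta}$ be an integer, i.e. that $N_\theta \mid (i-j)K$.

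Next I would invoke the coprimality hypothesis to cancel the factor $K$. Since $\gcd(K,N_\theta)=1$, Euclid's lemma (equivalently, a Bezout identity $aK + bN_\theta = 1$ argument) shows that $N_\theta \mid (i-j)K$ forces $N_\theta \mid (i-j)$. This is the crux of the proof: the coprimality of $K$ and $N_\theta$ is exactly what allows the multiplier $K$ to be stripped from the divisibility relation, so that the index difference itself must be a multiple of $N_\theta$.

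Finally I would apply the range restriction to pin down the difference. Because both indices lie in $\{0,1,\dots,N_\theta-1\}$, we have $|i-j| \le N_\theta - 1 < N_\theta$. The only multiple of $N_\theta$ with absolute value strictly less than $N_\theta$ is $0$, so $N_\theta \mid (i-j)$ together with this bound yields $i-j=0$, hence $i=j$. Therefore distinct indices produce distinct angles modulo $\pi$, which is the assertion.

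I do not expect a genuine obstacle here, as the statement is a standard consequence of Euclid's lemma and unique factorization. The only step requiring care is the translation in the first paragraph: one must state precisely that equality ``modulo $\pi$'' means the difference is an integer multiple of $\pi$, and then verify that after dividing by $\frac{\pi}{N_\theta}$ the condition becomes the clean integer divisibility relation $N_\theta \mid (i-j)K$ on which coprimality can act.
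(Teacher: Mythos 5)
Your proof is correct and follows essentially the same route as the paper's: both translate the angular coincidence into the divisibility relation $(i-j)K \equiv 0 \pmod{N_\theta}$, use the coprimality of $K$ and $N_\theta$ (Euclid's lemma) to conclude $N_\theta \mid (i-j)$, and then invoke the bound $|i-j| < N_\theta$ to force $i=j$. The only cosmetic difference is that the paper phrases it as a contradiction with $i \neq j$ and cancels the factor $K$ explicitly by writing $c_1 = Kc_2$, whereas you cancel $K$ directly from $N_\theta \mid (i-j)K$; the substance is identical.
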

\begin{proof}

Let us assume for the sake of contradiction that there are two integers $i$, $j$ such that $i \neq j$,  $0 \leq i,j \leq N_{\theta}-1$, and $\theta_i = \theta_j $ (modulo $\pi$).

Using the definition of $\theta_i$, this implies
\begin{equation}
\label{eq:th1_1}
    i K  = j K + c_1 N_{\theta} \ ,
\end{equation}
where $c_1$ is an integer constant.
Rearranging equation~\eqref{eq:th1_1}, we have
\begin{equation}
\label{eq:th1_2}
    (i-j) K = c_1 N_{\theta} \ .
\end{equation}
Now since the left-hand-side of equation~\eqref{eq:th1_2} is a multiple of K, so must be the right-hand-side.
However, since $\text{gcd}(K,N_{\theta})=1$, $c_1$ must be a multiple of K.
Let $c_1 = K c_2$, for some integer constant $c_2$.
Then, equation~\eqref{eq:th1_1} becomes
\begin{equation}
    i K  = j K + K c_2 N_{\theta} \ .
\end{equation}
Dividing by $K$ on both sides give
\begin{equation}
\label{eq:th1equiv}
    i = j + c_2 N_{\theta} \ .
\end{equation}
However, our initial assumption of $0 \leq i,j \leq N_{\theta}-1$ and $i \neq j$ is a direct contradiction to equation~\eqref{eq:th1equiv}.

\end{proof}

\begin{theorem}\label{theorem_N_theta}
If $K$, $m$, $n$ are integers such that $\text{gcd}(K,n)=1$ then $\text{gcd}(K,mK-n)=1$
\end{theorem}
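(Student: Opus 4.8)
The plan is to reduce the claim to the elementary fact that any common divisor of $K$ and $mK-n$ must also divide $n$, in the same divisibility-chasing spirit as the proof of Theorem~\ref{theorem_unique_angles}. First I would let $d = \gcd(K,\, mK-n)$ and note that by definition $d \mid K$ and $d \mid (mK-n)$. Since $d \mid K$ it follows that $d \mid mK$, and hence $d$ divides the difference $mK - (mK-n) = n$. Thus $d$ is a common divisor of both $K$ and $n$, which forces $d \mid \gcd(K,n)$.

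Invoking the hypothesis $\gcd(K,n)=1$ then gives $d \mid 1$, so $d=1$, which is exactly the assertion $\gcd(K,\, mK-n)=1$. Equivalently, one could appeal directly to the Euclidean-algorithm identity $\gcd(a,\, b+qa) = \gcd(a,b)$ with $a=K$, $b=-n$, and $q=m$, which collapses the argument to the single line $\gcd(K,\, mK-n) = \gcd(K,\, -n) = \gcd(K,n) = 1$. I would note this as the conceptual reason for the result but present the self-contained divisibility argument above so that the proof does not rely on citing the identity.

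I do not anticipate any genuine obstacle here, since the statement is a one-step consequence of the invariance of the gcd under adding an integer multiple of one argument to the other. The only point requiring a little care is the \emph{direction} of the divisibility deductions: the conclusion $d \mid n$ must be obtained by combining $d \mid K$ (which yields $d \mid mK$) with $d \mid (mK-n)$, rather than being assumed, and the sign on $n$ is irrelevant because $\gcd(K,-n)=\gcd(K,n)$.
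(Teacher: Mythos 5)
Your proof is correct and rests on the same key fact as the paper's: any common divisor of $K$ and $mK-n$ must divide $n = mK - (mK-n)$, which contradicts (or is precluded by) $\gcd(K,n)=1$. The paper phrases this as a proof by contradiction while you argue directly, but this is a cosmetic difference; if anything, your direct version is slightly cleaner, since the paper's contradiction step tacitly requires the common factor $q$ to satisfy $q>1$.
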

\begin{proof}

For the sake of contradiction let us assume $\text{gcd}(K,mK-n)\neq1$
Thus we have $K = aq$ and $mK-n = bq$ for some integers $a$, $b$, $q$. 

Therefore, $n = mK-bq=q(ma-b)$.
Thus we have an integer $q$ that divides both $K$ and $n$ making $\text{gcd}(K,n)\neq1$ and leading to a contradiction.

\end{proof}

\end{appendix}

\bibliographystyle{IEEEtran}
\bibliography{ref.bib}

\begin{thebibliography}{10}
\providecommand{\url}[1]{#1}
\csname url@samestyle\endcsname
\providecommand{\newblock}{\relax}
\providecommand{\bibinfo}[2]{#2}
\providecommand{\BIBentrySTDinterwordspacing}{\spaceskip=0pt\relax}
\providecommand{\BIBentryALTinterwordstretchfactor}{4}
\providecommand{\BIBentryALTinterwordspacing}{\spaceskip=\fontdimen2\font plus
\BIBentryALTinterwordstretchfactor\fontdimen3\font minus
  \fontdimen4\font\relax}
\providecommand{\BIBforeignlanguage}[2]{{%
\expandafter\ifx\csname l@#1\endcsname\relax
\typeout{** WARNING: IEEEtran.bst: No hyphenation pattern has been}%
\typeout{** loaded for the language `#1'. Using the pattern for}%
\typeout{** the default language instead.}%
\else
\language=\csname l@#1\endcsname
\fi
#2}}
\providecommand{\BIBdecl}{\relax}
\BIBdecl

\bibitem{crawford1990computed}
C.~R. Crawford and K.~F. King, ``Computed tomography scanning with simultaneous
  patient translation,'' \emph{Medical physics}, vol.~17, no.~6, pp. 967--982,
  1990.

\bibitem{rutherford2016evaluating}
M.~E. Rutherford, D.~J. Chapman, T.~G. White, M.~Drakopoulos, A.~Rack, and
  D.~E. Eakins, ``Evaluating scintillator performance in time-resolved hard
  x-ray studies at synchrotron light sources,'' \emph{Journal of Synchrotron
  Radiation}, vol.~23, no.~3, pp. 685--693, 2016.

\bibitem{harpen1999simple}
M.~D. Harpen, ``A simple theorem relating noise and patient dose in computed
  tomography,'' \emph{Medical Physics}, vol.~26, no.~11, pp. 2231--2234, 1999.

\bibitem{flohr2001heart}
T.~Flohr and B.~Ohnesorge, ``Heart rate adaptive optimization of spatial and
  temporal resolution for electrocardiogram-gated multislice spiral ct of the
  heart,'' \emph{Journal of computer assisted tomography}, vol.~25, no.~6, pp.
  907--923, 2001.

\bibitem{bouman1996unified}
C.~A. Bouman and K.~Sauer, ``A unified approach to statistical tomography using
  coordinate descent optimization,'' \emph{IEEE Transactions on image
  processing}, vol.~5, no.~3, pp. 480--492, 1996.

\bibitem{mohan2015timbir}
K.~A. Mohan, S.~Venkatakrishnan, J.~W. Gibbs, E.~B. Gulsoy, X.~Xiao,
  M.~De~Graef, P.~W. Voorhees, and C.~A. Bouman, ``Timbir: A method for
  time-space reconstruction from interlaced views,'' \emph{IEEE Transactions on
  Computational Imaging}, vol.~1, no.~2, pp. 96--111, 2015.

\bibitem{majee2021multi}
S.~Majee, T.~Balke, C.~A. Kemp, G.~T. Buzzard, and C.~A. Bouman, ``Multi-slice
  fusion for sparse-view and limited-angle 4{D} {CT} reconstruction,''
  \emph{IEEE Transactions on Computational Imaging}, vol.~7, pp. 448--461,
  2021.

\bibitem{majee20194d}
------, ``{4D} {X-Ray} {CT} reconstruction using multi-slice fusion,'' in
  \emph{2019 IEEE International Conference on Computational Photography
  (ICCP)}.\hskip 1em plus 0.5em minus 0.4em\relax IEEE, 2019, pp. 1--8.

\bibitem{majee2021high}
S.~Majee, ``High speed imaging via advanced modeling,'' Ph.D. dissertation,
  Purdue University Graduate School, 2021.

\bibitem{ching2018rotation}
D.~J. Ching, M.~Hidayeto{\u{g}}lu, T.~Bi{\c{c}}er, and D.~G{\"u}rsoy,
  ``Rotation-as-fast-axis scanning-probe x-ray tomography: the importance of
  angular diversity for fly-scan modes,'' \emph{Applied optics}, vol.~57,
  no.~30, pp. 8780--8789, 2018.

\bibitem{frikel2013characterization}
J.~Frikel and E.~T. Quinto, ``Characterization and reduction of artifacts in
  limited angle tomography,'' \emph{Inverse Problems}, vol.~29, no.~12, p.
  125007, 2013.

\bibitem{chang2011preliminary}
M.~Chang, Y.~Xiao, Z.~Chen, L.~Li, and L.~Zhang, ``Preliminary study of rotary
  motion blurs in a novel industry {CT} imaging system,'' in \emph{2011 IEEE
  Nuclear Science Symposium Conference Record}.\hskip 1em plus 0.5em minus
  0.4em\relax IEEE, 2011, pp. 1358--1361.

\bibitem{chen2015computed}
D.~Chen, H.~Li, Q.~Wang, P.~Zhang, and Y.~Zhu, ``Computed tomography for
  high-speed rotation object,'' \emph{Optics express}, vol.~23, no.~10, pp.
  13\,423--13\,442, 2015.

\bibitem{cant2015modeling}
J.~Cant, W.~J. Palenstijn, G.~Behiels, and J.~Sijbers, ``Modeling blurring
  effects due to continuous gantry rotation: Application to region of interest
  tomography,'' \emph{Medical Physics}, vol.~42, no.~5, pp. 2709--2717, 2015.

\bibitem{SIRT}
P.~Gilbert, ``Iterative methods for the three-dimensional reconstruction of an
  object from projections,'' \emph{Journal of theoretical biology}, vol.~36,
  no.~1, pp. 105--117, 1972.

\bibitem{ching2019time}
D.~Ching, S.~Aslan, V.~Nikitin, and D.~G{\"u}rsoy, ``Time-coded aperture for
  x-ray imaging,'' \emph{Optics Letters}, vol.~44, no.~11, pp. 2803--2806,
  2019.

\bibitem{ART}
R.~Gordon, R.~Bender, and G.~T. Herman, ``Algebraic reconstruction techniques
  (art) for three-dimensional electron microscopy and x-ray photography,''
  \emph{Journal of theoretical Biology}, vol.~29, no.~3, pp. 471--481, 1970.

\bibitem{steven2018high}
I.~Steven~Tilley, A.~Sisniega, J.~H. Siewerdsen, and J.~W. Stayman,
  ``High-fidelity modeling of detector lag and gantry motion in {CT}
  reconstruction,'' in \emph{Conference proceedings. International Conference
  on Image Formation in X-Ray Computed Tomography}, vol. 2018.\hskip 1em plus
  0.5em minus 0.4em\relax NIH Public Access, 2018, p. 318.

\bibitem{raskar2006coded}
R.~Raskar, A.~Agrawal, and J.~Tumblin, ``Coded exposure photography: motion
  deblurring using fluttered shutter,'' in \emph{ACM SIGGRAPH 2006 Papers},
  2006, pp. 795--804.

\bibitem{pergament2014high}
M.~Pergament, M.~Kellert, K.~Kruse, J.~Wang, G.~Palmer, L.~Wissmann, U.~Wegner,
  and M.~Lederer, ``High power burst-mode optical parametric amplifier with
  arbitrary pulse selection,'' \emph{Optics express}, vol.~22, no.~18, pp.
  22\,202--22\,210, 2014.

\bibitem{gembicky2005fast}
M.~Gembicky, D.~Oss, R.~Fuchs, and P.~Coppens, ``A fast mechanical shutter for
  submicrosecond time-resolved synchrotron experiments,'' \emph{Journal of
  synchrotron radiation}, vol.~12, no.~5, pp. 665--669, 2005.

\bibitem{rose2020variable}
V.~Rose, T.~Ajayi, D.~Rosenmann, and N.~Shirato, ``A variable x-ray chopper
  system for phase-sensitive detection in synchrotron x-ray scanning tunneling
  microscopy,'' \emph{Journal of Synchrotron Radiation}, vol.~27, no.~5, pp.
  1382--1387, 2020.

\bibitem{broennimann2006pilatus}
C.~Broennimann, E.~Eikenberry, B.~Henrich, R.~Horisberger, G.~Huelsen, E.~Pohl,
  B.~Schmitt, C.~Schulze-Briese, M.~Suzuki, T.~Tomizaki \emph{et~al.}, ``The
  {PILATUS} 1{M} detector,'' \emph{Journal of synchrotron radiation}, vol.~13,
  no.~2, pp. 120--130, 2006.

\bibitem{ejdrup2009picosecond}
T.~Ejdrup, H.~Lemke, K.~Haldrup, T.~Nielsen, D.~Arms, D.~Walko, A.~Miceli,
  E.~Landahl, E.~Dufresne, and M.~M. Nielsen, ``Picosecond time-resolved laser
  pump/x-ray probe experiments using a gated single-photon-counting area
  detector,'' \emph{Journal of synchrotron radiation}, vol.~16, no.~3, pp.
  387--390, 2009.

\bibitem{balke2018separable}
T.~Balke, S.~Majee, G.~T. Buzzard, S.~Poveromo, P.~Howard, M.~A. Groeber,
  J.~McClure, and C.~A. Bouman, ``Separable models for cone-beam {MBIR}
  reconstruction,'' \emph{Electronic Imaging}, vol. 2018, no.~15, pp. 181--1,
  2018.

\bibitem{majee2017model}
S.~Majee, D.~H. Ye, G.~T. Buzzard, and C.~A. Bouman, ``A model based neuron
  detection approach using sparse location priors,'' \emph{Electronic Imaging},
  vol. 2017, no.~17, pp. 10--17, 2017.

\bibitem{boyd2011distributed}
S.~Boyd, N.~Parikh, E.~Chu, B.~Peleato, J.~Eckstein \emph{et~al.},
  ``Distributed optimization and statistical learning via the alternating
  direction method of multipliers,'' \emph{Foundations and
  Trends{\textregistered} in Machine learning}, vol.~3, no.~1, pp. 1--122,
  2011.

\bibitem{aslan2019joint}
S.~Aslan, V.~Nikitin, D.~J. Ching, T.~Bicer, S.~Leyffer, and D.~G{\"u}rsoy,
  ``Joint ptycho-tomography reconstruction through alternating direction method
  of multipliers,'' \emph{Optics express}, vol.~27, no.~6, pp. 9128--9143,
  2019.

\bibitem{sridhar2020distributed}
V.~Sridhar, X.~Wang, G.~T. Buzzard, and C.~A. Bouman, ``Distributed iterative
  ct reconstruction using multi-agent consensus equilibrium,'' \emph{IEEE
  Transactions on Computational Imaging}, vol.~6, pp. 1153--1166, 2020.

\bibitem{armijo1966minimization}
L.~Armijo, ``Minimization of functions having lipschitz continuous first
  partial derivatives,'' \emph{Pacific Journal of mathematics}, vol.~16, no.~1,
  pp. 1--3, 1966.

\bibitem{zang2018space}
G.~Zang, R.~Idoughi, R.~Tao, G.~Lubineau, P.~Wonka, and W.~Heidrich,
  ``Space-time tomography for continuously deforming objects,'' 2018.

\bibitem{withers2021x}
P.~J. Withers, C.~Bouman, S.~Carmignato, V.~Cnudde, D.~Grimaldi, C.~K. Hagen,
  E.~Maire, M.~Manley, A.~Du~Plessis, and S.~R. Stock, ``X-ray computed
  tomography,'' \emph{Nature Reviews Methods Primers}, vol.~1, no.~1, pp.
  1--21, 2021.

\bibitem{svmbir-2020}
S.~D. Team, ``{S}uper-{V}oxel {M}odel {B}ased {I}terative {R}econstruction
  ({SVMBIR}),'' Software library available from
  \url{https://github.com/cabouman/svmbir}, 2020.

\bibitem{wang2016fast}
X.~Wang, K.~A. Mohan, S.~J. Kisner, C.~Bouman, and S.~Midkiff, ``Fast voxel
  line update for time-space image reconstruction,'' in \emph{2016 IEEE
  International Conference on Acoustics, Speech and Signal Processing
  (ICASSP)}.\hskip 1em plus 0.5em minus 0.4em\relax IEEE, 2016, pp. 1209--1213.

\bibitem{wang2017massively}
X.~Wang, A.~Sabne, P.~Sakdhnagool, S.~J. Kisner, C.~A. Bouman, and S.~P.
  Midkiff, ``Massively parallel 3{D} image reconstruction,'' in
  \emph{Proceedings of the International Conference for High Performance
  Computing, Networking, Storage and Analysis}, 2017, pp. 1--12.

\bibitem{friedman2013simple}
S.~N. Friedman, G.~S. Fung, J.~H. Siewerdsen, and B.~M. Tsui, ``A simple
  approach to measure computed tomography (ct) modulation transfer function
  (mtf) and noise-power spectrum (nps) using the american college of radiology
  (acr) accreditation phantom,'' \emph{Medical physics}, vol.~40, no.~5, p.
  051907, 2013.

\bibitem{singh2017varied}
S.~Singh, T.~J. Stannard, S.~S. Singh, A.~S. Singaravelu, X.~Xiao, and
  N.~Chawla, ``Varied volume fractions of borosilicate glass spheres with
  diameter gaussian distributed from 38-45 micronsen cased in a polypropylene
  matrix,'' Argonne National Lab.(ANL), Argonne, IL (United States), Tech.
  Rep., 2017.

\end{thebibliography}

\end{document}